%%
%% IMPORTANT NOTICE:
%% 
%% For the copyright see the source file.
%% 
%% Any modified versions of this file must be renamed
%% with new filenames distinct from sample-sigconf.tex.
%% 
%% For distribution of the original source see the terms
%% for copying and modification in the file samples.dtx.
%% 
%% This generated file may be distributed as long as the
%% original source files, as listed above, are part of the
%% same distribution. (The sources need not necessarily be
%% in the same archive or directory.)
%%
%% The first command in your LaTeX source must be the \documentclass command.
\documentclass[sigconf,authorversion,nonacm]{acmart}

%%
%% \BibTeX command to typeset BibTeX logo in the docs
\AtBeginDocument{%
  \providecommand\BibTeX{{%
    \normalfont B\kern-0.5em{\scshape i\kern-0.25em b}\kern-0.8em\TeX}}}

\usepackage{graphicx}%subfigure
\usepackage{subcaption}
\usepackage{soul}
\usepackage{tikz}
\usetikzlibrary{arrows}
\usetikzlibrary{backgrounds}
\tikzstyle{int}=[draw, fill=blue!20, minimum size=2em]
\tikzstyle{init} = [pin edge={to-,thin,black}]
\usepackage{pgfplots}
\pgfplotsset{compat=newest}
\usepackage[ruled,vlined]{algorithm2e}
\usepackage{enumitem}
\setlist[itemize] {noitemsep,nolistsep,leftmargin=*}

\newcommand{\squishlisttwo}{
   \begin{list}{$\bullet$}
       { \setlength{\itemsep}{0pt}      \setlength{\parsep}{3pt}
       \setlength{\topsep}{3pt}       \setlength{\partopsep}{0pt}
      \setlength{\leftmargin}{1.5em} \setlength{\labelwidth}{1em}
       \setlength{\labelsep}{0.5em} } }

\newcommand{\squishlistthree}{
   \begin{list}{$\bullet$}
   { \setlength{\itemsep}{0pt}    \setlength{\parsep}{0pt}
   \setlength{\topsep}{0pt}     \setlength{\partopsep}{0pt}
   \setlength{\leftmargin}{2em} \setlength{\labelwidth}{1.5em}
   \setlength{\labelsep}{0.5em} } }
\newcommand{\squishend}{
\end{list}  }

\newcommand{\etal}{et al. }

\usepackage{comment}

\begin{document}

%%
%% The "title" command has an optional parameter,
%% allowing the author to define a "short title" to be used in page headers.
\title{Multi-Agent Join}

%%
%% The "author" command and its associated commands are used to define
%% the authors and their affiliations.
%% Of note is the shared affiliation of the first two authors, and the
%% "authornote" and "authornotemark" commands
%% used to denote shared contribution to the research.

\author{Vahid Ghadakchi}
\affiliation{
	\institution{School of EECS\\Oregon State University}
        \country{USA}
}
\email{ghadakcv@oregonstate.edu}

\author{Mian Xie}
\affiliation{
	\institution{School of EECS\\Oregon State University}
        \country{USA}
}
\email{xiemia@oregonstate.edu}

\author{Arash Termehchy}
\affiliation{
	\institution{School of EECS\\Oregon State University}
        \country{USA}
}
\email{termehca@oregonstate.edu}

\author{Bakhtiyar Doskenov}
\affiliation{
	\institution{School of EECS\\Oregon State University}
        \country{USA}
	}
\email{doskenob@oregonstate.edu}

\author{Bharghav Srikhakollu}
\affiliation{%
	\institution{School of EECS\\Oregon State University}
        \country{USA}
}
\email{srikhakb@oregonstate.edu}

\author{Summit Haque}
\affiliation{
	\institution{School of EECS\\Oregon State University}
        \country{USA}
}
\email{haquesu@oregonstate.edu}

\author{Huazheng Wang}
\affiliation{%
	\institution{School of EECS\\Oregon State University}
        \country{USA}
}
\email{huazheng.wang@oregonstate.edu}

%\author{Michael Burton}
%\affiliation{%
%	\institution{School of EECS\\Oregon State University}
%}
%\email{burtomic@oregonstate.edu}

%%
%% By default, the full list of authors will be used in the page
%% headers. Often, this list is too long, and will overlap
%% other information printed in the page headers. This command allows
%% the author to define a more concise list
%% of authors' names for this purpose.
%\renewcommand{\shortauthors}{Ghadakchi, et al.}

\begin{abstract}
It is crucial to provide real-time performance in many applications, such as interactive and exploratory data analysis. 
In these settings, users often need to view subsets of query results quickly. 
It is challenging to deliver such results over large datasets for relational operators over multiple relations, such as join. 
Join algorithms usually spend a long time on scanning and attempting to join parts of relations that may not generate any result.
Current solutions usually require lengthy and repeating preprocessings, which are costly and may not be possible to do in many settings.
Also, they often support restricted types of joins.
In this paper, we outline a novel approach for achieving efficient join processing in which a scan operator of the join learns during query execution, the portions of its relations that might satisfy the join predicate. 
We further improve this method using an algorithm in which both scan operators collaboratively learn an efficient join execution strategy.
We also show that this approach generalizes traditional and non-learning methods for join.
Our extensive empirical studies using standard benchmarks indicate that this approach outperforms similar methods considerably. 
\end{abstract}

%%
%% The code below is generated by the tool at http://dl.acm.org/ccs.cfm.
%% Please copy and paste the code instead of the example below.
%%
\begin{CCSXML}
	<ccs2012>
	<concept>
	<concept_id>10002951.10002952.10003190.10003192</concept_id>
	<concept_desc>Information systems~Database query processing</concept_desc>
	<concept_significance>500</concept_significance>
	</concept>
	</ccs2012>
\end{CCSXML}

\ccsdesc[500]{Information systems~Database query processing}

%%
%% Keywords. The author(s) should pick words that accurately describe
%% the work being presented. Separate the keywords with commas.
\keywords{query processing, online learning, multi-agent learning}

%%
%% This command processes the author and affiliation and title
%% information and builds the first part of the formatted document.
\maketitle
\section{Introduction}
\label{sec:intro}
%GOAL: JOIN AS AN Coordination/ agreement Problem ; JOIN PLAN IS DEVELOPED COLLABORATIVELY
%Generalization: predicting the reward of a group based on its values using the reward of previously explored groups and commonality of join keys of them and the new group. The reward of a group using generalize the observation to the values. 
% better than index building online as we limit the size of the table.

%Only 1-failure with large values of k (adaptive k)
%Exploration first to set a value for K : higher accuracy for estimation first, then lower accuracy if bandit is OK later. -> changing accuracy like SkinnerDB gradually if we do not know accuracy
%Aggregated selection functions for selection operators
%the challenges of learning complex algorithms, how logical plans create a natural way, how to push down learning., easily integrated in pipelining approach in query processing
%------------------------ START ---------------------------

It is crucial to provide real-time performance for queries over large data in many settings, such as interactive and exploratory data analysis and online query processing \cite{Carey:1997:SLA:253260.253302,DBLP:conf/vldb/CareyK98,Hellerstein:1997:OA:253260.253291,Haas:1999:RJO:304182.304208,li2016wander}. 
However, it is very challenging to provide highly efficient query processing over large data. 
In these settings, users can often extract accurate insights using only subsets of query results \cite{Hellerstein:1997:OA:253260.253291,Haas:1999:RJO:304182.304208,li2016wander}.
%In these settings, it is often desirable to provide subsets or approximation of query results quickly to reduce the time of data exploration and interaction. 
Thus, to satisfy the desired response time, researchers have proposed {\it approximate query processing} methods that estimate desired insight quickly using a relatively small sample of query results.
Using this approach, the system may return progressively more accurate query results so users can stop the execution as soon as they have enough information \cite{Hellerstein:1997:OA:253260.253291,Haas:1999:RJO:304182.304208,li2016wander,DBLP:conf/sigmod/JermaineDAJP05}. 
\begin{comment}
%Users may also want to view results of their queries progressively to investigate them quickly as they are being produced. \cite{}.
%This enables users to view and investigate the results without long waiting periods and as the results are being produced.
%This capability reduces users' waiting time and speeds up their data analysis tasks. 
%This way, users can stop the query execution and as soon as they have enough information.  
As an example, the recent pandemic showed an urgent need for analyzing the ginormous datasets quickly to model and forecast public health concerns in real-time \cite{CDCForcasting,medium:2020}.
It is vital that epidemiologists interactively test various hypotheses about how a novel virus spreads over numerous and continually evolving case reports efficiently to recommend effective public policies as fast as possible. 
An epidemiologist would like to view subsets of answers to her query fast to design next query based on the answers quickly.
%It is important to present the query results without long delays to the epidemiologist, so she can investigate them quickly as they become available.  
%In these settings, users may also want to see the relatively earlier results more quickly and see the rest of them progressively.
\end{comment}

This is, however, challenging to return sufficiently accurate results quickly for queries over more than a single relation.
Examples of these queries are \textit{join} or {\it intersection} of relations. 
To process these queries, an algorithm may explore and check a huge space of possible combinations of tuples from different relations, e.g., quadratic space for a join of two relations.
Since most combinations of tuples do not satisfy the condition of a given query, e.g., one with a join predicate, a significant majority of tested combinations do {\it not} return any result. 

\begin{comment}
It is, however, challenging to find sufficiently large subsets or the entire results of binary relational operators, such as \textit{join}, over relations with many tuples quickly. 
To execute these operators, the relational data system has to check numerous if not all pairs of tuples from both input relations to find the ones that satisfy the operator's predicate. 
In many join queries, a substantial majority of these pairs do not satisfy the join predicate. 
As the system often does not know the pairs of tuples that satisfy the predicate, it may spend a long time to check numerous pairs without returning any or very few results.
Also, the information of input relations often stored on secondary storage and accessing them takes lengthy I/O accesses.
Hence, returning subsets or the entire join results might be very time-consuming and involve long delays. 
Thus, users may have to wait for a long time to view subsets of query results. 
This has become ever more challenging due to the rapid growth and frequent evolution of available datasets.
%Currently, in the absence of auxiliary data structures and data preprocessings, the scan operator may waste a great deal of time and read tuples from their corresponding relations that may {\it not} satisfy the join condition and do not contribute toward generation of any joint tuple. 
\end{comment}

Researchers have proposed algorithms that return subsets of query results quickly, however, these algorithms often require substantial amounts of computational resources, significant time and effort for preprocessing and maintenance, or handle only restricted types of queries \cite{Hellerstein:1997:OA:253260.253291,Haas:1999:RJO:304182.304208,DBLP:conf/sigmod/JermaineDAJP05,li2016wander}. 
For instance, {\it Ripple Join} progressively reads tuples from base relations and keeps them in main memory to improve efficiency of the query execution \cite{Haas:1999:RJO:304182.304208}.
It, however, often runs out of main memory for large relations before generating sufficiently many results \cite{DBLP:conf/sigmod/JermaineDAJP05}.

Some methods build auxiliary data structures, e.g., indexes, over relations to locate tuples that generate results quickly \cite{li2016wander}.
Most users, however, cannot afford to wait for the time-consuming preprocessing steps of building these data structures, which are often repeated whenever the data evolves.
Indexes also have substantial storage and update overheads.
Moreover, required indexes for a query workload are usually determined by experts or pre-trained models \cite{chaudhuri2007self,vanaken21}. 
Normal users do not have the expertise to guide index building effectively. 
There might not also be sufficient training data to learn an accurate model of the query workload to guide these preprocessing steps.
Furthermore, this approach cannot handle many types of queries, e.g., joins with user-defined functions.

\begin{comment}
Others build auxiliary data structures based on the join predicate, e.g., indexes, over both relations to locate tuples that generate results quickly \cite{li2016wander}.%10.1145/3448016.3457302
Most users, however, cannot afford to wait for the time-consuming preprocessing steps of building these data structures, which are repeated whenever the data evolves.
The indexes needed for a query workload are usually determined by experts or pre-trained models \cite{chaudhuri2007self,vanaken21}. Most normal users, such as epidemiologists, do not have the expertise to guide index building effectively. 
There might not also be sufficient training data to learn an accurate model of the query workload to guide these preprocessing steps.
Moreover, current guiding models usually assume that the distribution of query workload is fixed over time \cite{chaudhuri2007self,vanaken21}.
Users, however, increasingly produce non-stationary workloads particularly in interactive settings \cite{McCamish:2018:DIG:3183713.3196899}. %10.14778/2168651.2168652 
These data structures also take substantial storage and update overheads.
Users might not have enough resources to build and maintain them for many queries.
Finally, this approach may help with limited types of join predicates, e.g., joins with equality conditions (equi-join).
%Some systems amortize the upfront cost of creating indexes by building them gradually over the course of many queries \cite{Idreos07databasecracking}. As it may take many queries to prepare a sufficiently useful index, it usually takes these systems a long time to process many (initial) queries. It is also challenging to use this approach for query workloads whose distributions change over time, i.e., non-stationary \cite{10.14778/2168651.2168652}. 
\end{comment}

Some methods use sort- and hash-based approaches to reorganize the input relations so they can quickly find subsets of tuples that match the query conditions \cite{DBLP:conf/sigmod/JermaineDAJP05,journals/debu/UrhanF00,10.1145/564691.564721}. %,10.1145/304182.304209
These methods, however, support only limited types of queries, e.g., cannot handle joins with user-defined functions.
Due to the popularity of statistical analysis and learning over large data, join predicates increasingly contain complex conditions, such as user-defined functions.
They might also need a significant amount of main memory to generate results efficiently, e.g., sort-based methods keep considerable portions of relations in main memory \cite{DBLP:conf/sigmod/JermaineDAJP05}.

\begin{comment}
Researchers have also extended sort- and hash-based join algorithms for progressive query computation \cite{DBLP:conf/sigmod/JermaineDAJP05,10.1145/564691.564721,journals/debu/UrhanF00}.
These methods need to reorganize the input relations in time-consuming steps before generating any result \cite{DBBook}.
Moreover, there support only a limited types of queries, e.g., equi-joins.
Due to the popularity of statistical learning and inference over large data, join predicates increasingly contain complex conditions, such as user-defined functions or inequality conditions.
They might also need a significant amount of main memory to generate results efficiently, which might not be available, e.g., progressive sort-based methods keep considerable portions of both relations in main memory \cite{DBLP:conf/sigmod/JermaineDAJP05}.
\end{comment}

Toward realizing the vision of delivering real-time insight over large data, in this paper, we propose a novel approach to join processing that {\it learns online and during query execution} to return join results quickly.
Based on the results of early query processing, our method learns the portion of the (base) relations that are most likely to produce the most output tuples and joins them.
Our approach does not require any pre-processing or indexes.
It views the join predicate as a black box and handles joins with complex conditions. 

Nevertheless, it is challenging to accurately learn efficient query processing strategies over large data and produce query results quickly at the same time.
Our method should establish a trade-off between producing fresh results based on its current information and strategy, i.e., {\it exploitation}, and searching for more efficient strategies, i.e., {\it exploration}.
Moreover, as each relation may contain numerous tuples, it must explore many possible strategies to find the efficient ones. 
This becomes more challenging by the usual restrictions on accessing data on the secondary storage.
For example, in the absence of any index, it must examine tuples on disk sequentially. 
%Current online learning algorithms, however, often assume that each learner can randomly access to all its alternatives. 
We leverage tools from the area of online learning and devise algorithms that address the aforementioned challenges efficiently with theoretical guarantees. 
Our contributions:

\squishlisttwo
    \item We provide a novel framework for processing a join query as collaborative online learning of its scan operators. 
    We show that this approach generalizes the current methods of join processing. Therefore, a system might use a combination of traditional algorithms and learning operators to execute queries (Section~\ref{qp:framework}).
    
    \item  We leverage popular online learning methods to design an algorithm for scan operators that learns portions of base relation that generate the most results. But, we show that due to the enormous search space, restrictions in access to data, and limited rewards, scan operators cannot use popular online learning methods to learn efficient strategies quickly (Section~\ref{qp:challenges}). 
        
    \item We propose a novel and efficient algorithm to learn efficient strategies in which one scan operator learns its strategy and the other one uses sequential scan (Section~\ref{sec:singleLearning}).
        %and show that it performs significantly fewer I/O accesses than the comparable baselines statistically speaking. 
        
    \item Users often would like to compute accurate estimates of aggregation functions over sample results. Our proposed method may be biased toward the results that are created from the portions of base relations that produce more joins than other parts of the relations.  
    %As the learning scan operator may not send tuples that do not generate many results less often than the ones that do, one may not be able to use the generated results to compute accurate estimates of aggregation functions.
    We propose an unbiased, consistent, and asymptotically normal estimator to compute aggregation functions over the results of our method, which delivers high-probability confidence intervals for its estimates (Section~\ref{sec:estimate}).
    
    \item We show that it will improve the time of producing subsets of join results, if both scan operators of join learn and adapt their strategies online.
        The resulting join algorithms, however, may perform redundant data access and computation. 
        We propose techniques so that the scan operators coordinate their learning to reduce these redundant accesses and collaboratively learn efficient query processing strategies  (Section~\ref{sec:collaboration}).
    
    \item We report the results of our extensive empirical study using a standard benchmark. 
        Our empirical studies indicate that the proposed methods significantly outperform comparable current techniques for delivering subsets of results in most cases.
        We show that where learning is not by definition helpful, e.g., all tuples in both relation generate the same number of results, our method is almost as efficient as comparable methods (Section~\ref{qp:empirical}). 
\squishend
%\end{list}

One may extend our methods to other types of relational operators, such as intersection. 
However, due to the popularity of joins, in this paper, we focus on and report empirical studies for joins.
\section{Framework}
\label{qp:framework}
%Regret = Sampling Efficiency 
%Regret -> Random Sample Efficiency
We present the framework that models processing of a join as online collaborative learning of its operators. 
We first describe our framework and methods for binary joins and then explain how to use our approach for queries with more relations in Section~\ref{sec:collaboration}. 
%Our framework extends to other binary operators, such as intersection. Nonetheless, we restrict our attention in this paper to binary joins and leave detailed explanation and instantiating our framework for other types of queries as future work.

%------------------------------------------------------------------------------------------------------------------------------------------------------
\subsection{Agents, Actions, \& Reward}
%------------------------------------------------
\noindent{\bf Agents and Actions.}
Relational data systems usually model a query as a tree of a logical operators called {\it logical query plan} \cite{DBBook}. 
In a join query, the leaf nodes are scan operators that read information from the base relations. %secondary storage in blocks (pages). 
The tuples are pipelined from scans to the root operator, i.e., join, as the join operator calls its children's API to get fresh tuples. 
Each \textit{iteration} (or \textit{round}) of processing the query starts with the call from the join operator.
In our framework, each scan operator in the query plan is a learning {\it agent}. 
The {\it actions} of an agent is the set of its available activities in each round. 
The actions of each scan operator are the set of tuples in its base relation.
%As motivated in the preceding section, we assume that both relations to be joined are large and none of them are small enough to fit into available main memory. 

%------------------------------------------------
\noindent{\bf Reward.}
%TODO: disjunctive conditions
The {\it reward} of an action in each round of the query execution is the total number of joint tuples produced during that round. 
The join operator shares this reward with its child scan operators immediately after attempting to join their recently sent tuples. 
%As we assume that the relations are stored on the secondary storage, each agent has to perform some I/O access(es) to read tuples from the secondary storage. Since the dominating cost of performing joins is the time to perform I/O access, ideally, each scan operator may receive some reward in each round so that its effort of reading a tuple from disk produces some answers and does not go to waste. This increases the I/O efficiency of query execution. 
The {\it history} of a scan operator $O$ at round $t$ denoted as $H^{O}(t)$, is the sequence of pairs $(a_i,r_i)$, $0 \leq i \leq t-1$, where $a_i$ and $r_i$ are the action and the reward of the operator $O$ at round $i$ of the join.

%------------------------------------------------------------------------------------------------------------------------------------------------------
\subsection{Strategy \& Learning}
%------------------------------------------------
\noindent{\bf Strategies.}
The {\it strategy} or policy of an operator $O$ at round $t$ is a mapping from $H^{O}(t)$ to the set of its available actions.
Each strategy of a (logical) operator is a possible execution algorithm.
An operator might follow a {\it fixed strategy} that does not change in the course of query execution. 
Current query operators often follow fixed strategies. 
For example, the scan operator for the outer relation in the nested loop join follows a fixed strategy of sending the next tuple (or page) of the relation stored on the disk whenever requested by its parent join operator. 
The scan operator for the inner relation follows a similar fixed strategy.

An operator may use an {\it adaptive strategy} and choose actions in each round based on the actions' rewards and performance in their previous rounds using the history up to the current round.
%In particular, if the underlying relations contain sufficiently many tuples, i.e., a join has sufficiently many rounds, a scan operator may achieve a higher long-term reward by modifying its strategy during the query processing. 
For example, scan operators for the outer and inner relations in the nested loop join may use their history to send the tuples that are more likely to produce new joint tuples instead of the ones that may not lead to any results. 
Using the history of the join, a scan operator may learn that tuple $t_1$ in its base relation joins with significantly more tuples of the other relation than tuple $t_2$, i.e., $t_1$ produces more results (reward) than $t_2$.
Thus, if it sends $t_1$ to the join operator more often than $t_2$, it may cause the query to produce more joint tuples given the same time limit. %using the same number of I/O accesses.
As the query execution progresses, the operators may learn and improve the effectiveness of their strategies using the performance of their actions in preceding rounds.

Our framework generalizes current approaches to query processing in which operators follow fixed strategies. 
Hence, it supports using both traditional and adaptive strategies for query operators.
%Based on the performance of sent blocks in the previous rounds of the game, an agent may decide to send a certain block in the current round. 
%Ideally, an agent should adopt a strategy that maximizes its reward, i.e., send a block that leads to the most number of fresh joint tuples. 

%------------------------------------------------
\noindent{\bf Learning Strategies.}
Since the rewards of actions are not known at the start of the query processing, an operator has to learn these rewards while executing the query.
Thus, the operator may initially explore a subset of available actions to find the reasonably rewarding actions quickly. 
Subsequently, the operator may {\it exploit} this knowledge and use the most rewarding actions according to its investigations so far to increase the short-term reward or {\it explore} actions that have not been tried to find actions that produce higher rewards with the goal of improving the total reward in the long-run.
For example, consider the join of relations $R$ and $S$ where the scan over $S$ uses a fixed strategy of sending a randomly chosen tuple in each round. 
The scan over $R$ modifies its strategy in each round based on the information in the preceding rounds to read and send tuples  to the join operator that generate the most join results.  
It may initially send a few randomly chosen tuples to its parent join operator to both produce some joint tuples and estimate the reward of each tuple in this subset. 
In the subsequent rounds, the scan operator on $R$ may send the tuples with highest rewards so far, i.e., exploit, or pick other tuples that have not been tried before with the hope of finding ones with higher estimated reward, i.e., explore.
Arguably, the most important challenge in online learning is to find a balance between exploration and exploitation.

%------------------------------------------------
\noindent{\bf Objective Function.}
\label{sec:framework:objective}
A query processing algorithm should return the results of a query as fast as possible, therefore, each operator should maximize its total reward using fewest possible rounds. 
As explained in Section~\ref{sec:intro}, in approximate query processing, it is often desirable to deliver the results progressively where users investigate earlier results quickly while the system executes the query and delivers the rest of the results \cite{Carey:1997:SLA:253260.253302,Haas:1999:RJO:304182.304208,Hellerstein:1997:OA:253260.253291}.
%The user may stop the query execution as soon as she receives a sufficient amount of information or let it run until completion.
%This is particularly useful in interactive or exploratory workloads where users need to know at least an estimate of the results very fast.
%Moreover, many analyses over large data may be satisfied by a sufficiently large subset of query results \cite{Carey:1997:SLA:253260.253302,Haas:1999:RJO:304182.304208,Hellerstein:1997:OA:253260.253291}.
To quantify the overall objective of query processing in this setting, one may select a metric, such as {\it discounted (weighted/geometric) average of delays}, that is biased toward faster generation of early results.
%It measures the users' waiting time for receiving both \ul{a sample} and \ul{the full} query result. 
It measures the users' waiting time for receiving samples of results. 
More precisely, the discounted weighted average of delays is defined as $\sum_{i=1}^l{\gamma^it_i}$ where $0< \gamma < 1$, $t_i$ is the time to generate the $i$th result, and $l$ is set to the number of desired results.
It reflects the benefits of progressive or approximate query processing over large data more precisely than some other metrics, such as {\it completion time}.
The faster a query operator learns an efficient strategy during query execution, the smaller the value of this objective function is.

%In this case, the data system may choose to minimize {\it expected regret} that ensures finding the most rewarding actions for the long-run while keeping the short-term reward reasonably high. In others, users may prefer to reduce the (average) response time for a relatively small number of initial results. %\cite{ilyas2004supporting}. 
%It may be difficult to find out the total number of results in a dynamic environment and over evolving data. 
%Faster full results with negligible loss on recall - large partitions to reduce book-keeping overheads for full join. restarting from the beginning of S for each arm in R to reduce the book-keeping overhead and duplicate joins.
%The goal of all agents is to maximize their overall common reward in a sufficiently long time. The precise definitions of {\it overall reward} and {\it sufficiently long time} depend on the underlying setting and application.These requirements determine the objective function that quantifies the overall reward, which in turn determines how to balance the short- and long-term rewards \cite{slivkins2019introduction}.
%-------------------------------------------------------------------------------------------------------------------------------------------------------
\noindent{\bf Why Dual-Agent Framework?}
One may implement the learning of the processing strategy for the entire query in a single place, e.g., the root join operator, instead of both scans and join operators.
However, each operator has access to a distinct subset of data, e.g., each scan operator reads a different relation, and consequently a separate subset of available actions. 
It is natural to place learning of an effective strategy to apply those actions in its operator.
Learning an effective strategy becomes substantially more difficult as the number of possible actions grows \cite{slivkins2019introduction}. 
Our framework simplifies learning effective strategies by distributing the set of possible actions and placing relevant actions of each operator in its relevant learning agent.
Moreover, in our dual-agent approach, each operator communicates with the other operator(s) by sending data and receiving rewards and treats their internal logic, e.g., join predicate, and implementation as black boxes. 
Thus, as opposed to the centralized approach, the implementations of learning operators may be reused across different queries, e.g., a learning scan can be used for binary operators other than join, such as intersection. 
%Additionally, it may not be possible to learn an effective strategy for some operators online. For example, some aggregation functions require to see all tuples of a relation before delivering a precise result, which may make it difficult to compute via online learning. By distributing learning across various operators, our framework enables the data system to identify the parts of the query for which (online) learning of an efficient execution strategy is possible.
This approach may also simplify using parallelism in query processing as done in traditional relational query processing.

\section{UCB-Scan}
\label{qp:challenges}
%------------------------------------------------------------------------------------------------------------------------------------------------
%\subsection{Initial Design}
%To illustrate the promises and clarify the challenges of our approach, we present an initial algorithm to implement our framework.
%-------------------------------------------------------------------------------------------------------------------------------------------------
%\subsection{Adaptation of Multi-armed Bandit Algorithms}
%\label{sec:single:UCB}
\noindent{\bf Multi-armed Bandit.}
%As explained in Section~\ref{qp:framework}, each online strategy learning algorithm of scan operators should strike a balance between the time spent on exploring and learning the rewards of blocks and exploiting this knowledge to generate join results during query execution.
{\it Multi-armed Bandit} (MAB) is a classic and arguably the most popular  model for online learning problems that formalizes the {\it exploration/ exploitation } trade-off explained in Section~\ref{qp:framework} \cite{slivkins2019introduction,lattimore2020bandit}.
%online learning problem over a set of actions with unknown reward that formalizes the {\it exploration/ exploitation }  trade-off explained in Section~\ref{qp:framework} \cite{slivkins2019introduction}.
Consider a set of actions, i.e., arms, with unknown expected rewards.
Every time an action is performed, we observe a random sample of its reward.
The objective in MAB is to find the optimal sequence of actions to maximize the total expected reward.
{\it Upper Confidence Bound (UCB)} is a popular group of algorithms for MAB \cite{slivkins2019introduction}. %that achieve asymptotically optimal performance 
%A popular family MAB methods is {\it UCB} algorithms.
In a nutshell, they initially try every action once, observe its reward, and provide preliminary estimations of the expected reward of each action. 
In the subsequent rounds, they pick more often the actions that have {\it sufficiently large estimated expected rewards} or {\it have not been tried enough} and update their estimated rewards.
After sufficiently many trials, they commit to the single action with the largest estimated expected reward and pick it forever.
This approach delivers a statistically near-optimal balance between exploration and exploitation. 
%considerable reward while searching for the most rewarding action and strikes 

%------------------------------------------------------------------------------
\noindent{\bf Using Multi-armed Bandit Algorithms.}
%\label{sec:single:UCB}
One may use the aforementioned UCB algorithms to learn strategies for scan operators in a join. 
We denote the scan over $R$ and $S$ in the join of $R$ and $S$ as $R$-scan and $S$-scan, respectively. %with an arbitrary join predicate 
To simplify the exposition, we first assume that $S$-scan has a fixed strategy of reading a randomly chosen tuple from $S$ and sending it to the join operator, i.e., {\it random strategy}. %We will present learning for both scan operators in Section~\ref{qp:ShortcomingInitial}.
It is shown that in the absence of any order, e.g., sorted relation, sequential scan, i.e., heap-scan, simulates random sampling effectively \cite{Haas:1999:RJO:304182.304208}. 
Thus, we assume that $S$-scan implements the random strategy using sequential scan over $S$. 
On the other hand, $R$-scan aims at \ul{learning the rewards of tuples in $R$ accurately and quickly} and joining tuples of $R$ with $S$ in a \ul{decreasing order of their rewards}. 
This way, the join operator progressively produces results efficiently and significantly reduces the users' total waiting time to view query results.
As $R$-scan does not know the reward of tuples of $R$ prior to executing the join, its learning should take place during join processing. 

To leverage UCB algorithms, $R$-scan first sequentially scans the entire $R$ once. 
For each tuple $t_R$ read from $R$, we sequentially read one tuple from $S$ and join these tuples to observe the reward of $t_R$. 
$R$-scan keeps track of rewards of each tuple of $R$ in a {\it reward table} maintained in main memory. 
Each entry in the reward table maintains the address of every read tuple in $R$ (on disk or memory) and its current accumulated reward.
After this step, $R$-scan randomly accesses the tuple deemed the most promising by the UCB algorithm and joins it with the next tuple of $S$. 
This may require multiple sequential scans of $S$.
We call this algorithm {\it UCB-Scan}. 
To avoid generating duplicates, the join operator keeps track of the tuples of $R$ and $S$ that have been joined.
%in a table in main memory where each of its entries stores the addresses of all block of $S$ joined so far with a block of $R$.
%If the selected block of $R$ and current block of $S$ have been joined, the algorithm tries the next block of $S$. 

%----------------------------------------------------------------
\subsection{Shortcomings of UCB-Scan}
\label{qp:ShortcomingInitial}
%----------------------------------------------------------------
\noindent{\bf Finite Rewards.}
%\& Redundant Learning
%\label{qp:redundantLearning}
UCB algorithms assume that (almost) every action produces some non-zero reward whenever it is tried, i.e., \ul{each action has infinite accumulated reward}.
However, the accumulated reward of each tuple in our setting is finite: after a tuple of $R$ is joined with every tuple of $S$ it will deliver zero reward forever.
Using only the estimated most rewarding tuple of $R$, the join may not produce sufficiently many results. 
Thus, as opposed to UCB methods, $R$-scan must detect an exhausted tuple and proceed to find the next most rewarding tuple in $R$.
%For instance, after discovering the most rewarding block in $R$, $R$-scan may perform a full join between that tuple and $S$ by doing a full sequential scan of $S$. It then removes that tuple from its reward table and repeats the steps of the UCB algorithm over the remaining tuples of $R$ to learn and use the subsequent most rewarding tuple(s) in $R$. The join terminates after generating sufficiently many results.
More importantly, MAB algorithms often allow lengthy exploration. 
But, due to finite reward of each tuple, it may be worth stopping exploration altogether as soon as the algorithm finds a tuple with a sufficiently large estimated reward during exploration.
The algorithm may then exhaust that tuple by joining it with the rest of $S$ before trying other tuples in $R$.
%each scan operator must address another trade-off in addition to the one between exploration and exploitation.
%As MAB algorithms usually assume that each action produces a stochastic reward whenever tried, they allow lengthy exploration, e.g., each action may be tried several times.
%There are some MAB algorithm designed for cases where the rewards are limited \cite{10.5555/2891460.2891493}. They generally encourage a greedy approach based on the value of observed reward to minimize the time for exploration, e.g., if the value of the observed reward for an action is larger than a certain threshold, stick to and exhaust that action. However, these algorithms are {\it not} designed for a large set of actions and do not address the challenges mentioned in the preceding section. In the algorithm described in Section~\ref{sec:single:UCB}, the scan operator has to learn the subsequent most rewarding action afresh. This may significantly prolong learning and execution. It is exacerbated by the sheer number of actions for each scan. To address this problem, one may reuse the estimated rewards in the preceding rounds for actions that are still available to the algorithm. However, it is not clear how many more times these actions should be tried to learn an accurate estimate of their rewards.

%-------------------------------------------------------------------------------------------------
\noindent{\bf Large Set of Actions.}
%\label{qp:LargeSetAction}
Each scan operator may face numerous possible actions at each step of query processing over large relations.
It often has to find the most rewarding tuple from millions of tuples in its relation. 
Following UCB algorithms, the UCB-Scan inspects every action many times before finding the most rewarding one.
Given the considerable overhead associated with accessing each tuple, e.g., I/O random access, it may take UCB-Scan too long to learn the most rewarding tuple.
%UCB-Scan uses the reward table that contains the addresses of every tuple in its underlying relation to access its chosen tuple at each round. Furthermore, the reward table for a large relation may not fit in the available main memory. 
%There are variations of MAB algorithms that take into account the overheads of accessing actions \cite{10.5555/2891460.2891493}. They learn both the unknown rewards and overhead for each action. They avoid trying actions with significant overheads and find the one with the desired balance between its reward and overhead. Every block has almost a fixed access cost in our scan algorithm and thus these MAB algorithms cannot reduce the search space by eliminating more costly actions. Moreover, these algorithms are designed for a relatively small number of actions.

%------------------------------------------------------------------------------
\noindent{\bf Lack of Collaboration.}
%\label{sec:challenges:multi-agent}
In UCB-Scan, only one scan operator ($R$-scan) learns the optimal order for joining its tuples.
But, there may be several highly rewarding tuples in $S$ learning of which enables the system to generate many results quickly.
This may happen particularly when each tuple of $R$ joins with many tuples in $S$ and vice versa., i.e., $m$ to $n$ joins. 
Hence, $S$-scan may use UCB-Scan to learn rewarding tuples in $S$.
UCB algorithms require random samples of rewards for each action to deliver accurate reward estimations \cite{slivkins2019introduction}. 
Hence, $R$-scan must provide a randomly selected tuple of $R$ in each round so that $S$-scan observes a random reward for the current tuple of $S$. 
Because UCB-Scan sends more promising tuples more often, $R$-scan does {\it not} return random samples of tuples in $R$.
It is not clear how to devise an online learning algorithm for $S$-scan that accurately finds most rewarding tuples in $S$ without random reward samples.

\section{Online Sequential Learning}
\label{sec:singleLearning}

\begin{algorithm}%[tb]
	% \SetAlgoLined
	\SetKwData{r}{r}
	\SetKwData{s}{s}
	\SetKwData{k}{k}
    \SetKwData{N}{N}
    \SetKwData{t}{t}
    \SetKwData{failure}{failure}
    \SetKwData{success}{success}
    \SetKwData{maxTuple}{maxTuple}
	\SetKwArray{rewardTable}{rewardTable}
	\SetKwData{result}{result}
	\SetKwFunction{hasNext}{HasNext}
	\SetKwFunction{readNext}{NextBlock}
	\SetKwFunction{append}{Append} 
	\SetKwFunction{join}{Join}
	\SetKwFunction{size}{Size}
	\SetKwFunction{rem}{RemoveFromDictionary}
 	\SetKwFunction{increment}{Increment}
 	\SetKwFunction{findMax}{ArgMax}
 	\SetKwFunction{reset}{Reset}
	\SetKwInOut{Input}{input}
	\SetKwInOut{Output}{output}
	% \Input{Relations R and S, join query $q$ and cut-off $k$}
	% \Output{$k$ join results from $R \bowtie S$}
	\rewardTable $\gets$ $\emptyset$ \\
	\While{\size{\result} $<$ \k} { %${\bf and}$ \hasNext{R}
		\tcp{start a super round}		
		\While{\size{\rewardTable} $\le M$} {
			\tcp{Read sequentially the next tuple in $R$}
            \r $\gets$ $R$-Scan.nextTuple()\\ 
			%\If{not $S$-Scan.hasNextTuple()} {
			%	$S$-Scan.reset()
			%}
            \tcp{Estimate reward of $r$ using $N$-Failure}
            \failure $\gets$ 0\\		
            \success $\gets$ 0\\
            \tcp{Read tuples sequentially from $S$}
            \s $\gets$ $S$-Scan.nextTuple() \\
            \While{\failure $\leq$ \N} {
				\If{\r $\bowtie$ \s $\neq \emptyset$} {
                    \success++\\
                    \append{\result, \r $\bowtie$ \s}\\
                } \Else{
                    \failure++\\
                }
				\If{\size{\result} $\ge k$} {
					return\\
				}
				\s $\gets$ $S$-Scan.nextTuple() \\				
			}
            \tcp{Add information of \r to reward table}
            \rewardTable $\gets$ \rewardTable $\cup$ (\r.address, \success) \\
		}
		\tcp{Find most rewarding tuple}
        \maxTuple $\gets$ \findMax{\rewardTable} \\
		\rewardTable $\gets$ \rewardTable $\setminus$ \maxTuple \\
        \tcp{Exploit the most rewarding tuple}
        \r $\gets$ $R$-Scan.getTuple(\maxTuple.address)\\
		% TODO: exploration while exploittion
        \append{\result, \t $\bowtie$ S} \\
	}
 %\vspace{-2pt}
\caption{OSL algorithm to generate k tuples for R $\bowtie$ S} 
\label{alg:singleLearning}
\end{algorithm}
\vspace{-2pt}

In this section, we investigate learning for only one of the scan operators in the join of $R$ and $S$, namely $R$-scan.
We propose {\it Online Sequential Learning} (OSL) algorithm that addresses challenges discussed in Section~\ref{qp:challenges}.
%$R$-scan aims at quickly learning the reward of blocks in $R$ and use this information to join blocks of $R$ with $S$ in a decreasing order of their rewards. $R$-scan progressively produces samples of join results efficiently.
We assume that $S$-scan has a fixed random strategy using sequential scan over $S$ as explained in Section~\ref{qp:challenges}. 
%We will extend our approach to the case where both scans learn in Section~\ref{sec:collaboration}.

%--------------------------------------------------------------
\subsection{Algorithm}

\noindent{\bf General Structure.}
Algorithm~\ref{alg:singleLearning} describes the general structure of OSL algorithm for $R$-scan. 
To simplify the exposition, we show operations of both $R$-scan and $S$-scan in Algorithm~\ref{alg:singleLearning}.
OSL constitutes a series of {\it super-rounds}.
In each super-round, our method first {\it explores and learns} the most rewarding tuple amongst all (remaining) tuples in $R$.  
It produces join results while learning.
It then {\it exploits} the information learned in the super-round and performs a full join of the selected tuple with $S$.
Our method continues to the next super-rounds until it generates sufficiently many results based on the user's preference.
%We first explain and analyze the methods for learning most rewarding block(s) in the first super-round. Then, in Section~\ref{sec:singleLearning:Order}, we explain how to arrange multiple super-rounds to find and use rewarding blocks efficiently.

\noindent{\bf Estimating Reward of a Tuple in $R$.}
As discussed in Section~\ref{qp:challenges}, due to the large number of tuples in $R$, we must estimate the reward of each tuple in $R$ quickly by joining it with as few tuples of $S$ as possible.
To do this, we leverage the exploration technique of {\it $N$-Failure}, which is used to estimate rewards of actions in infinite action spaces \cite{Berry:1997}.
To estimate the reward of a tuple in $R$, we join the tuple with a sequence of randomly selected tuples of $S$ until the tuple does not generate any result in $N$ joins.
The estimated reward for the tuple of $R$ is the total number of join results produced in these join trials. 
We pick $N$ to be significantly smaller than the number of tuples in $S$, e.g., 10 in our experiments. 
The joint tuples produced during $N$-Failure are returned to users, thus, our method generates results during learning.
Since in large databases $S$ has usually numerous tuples, one can usually estimate the reward of a large fraction of tuples in $R$ by a single sequential scan of $S$.

\begin{comment}
We define each {\it round} as the join of a block of $R$ and a block $S$.
Using $N$-Failure, initially, $R$-scan and $S$-scan (sequentially) read a new block form their relations and send these blocks to the join operator. 
In each subsequent round, $S$-scan reads a fresh block from $S$. 
$R$-scan, on the other hand, does not read any block and keeps sending its current block in main memory to the join operator if this block has produced at least one joint tuple during the last $N$ rounds. 
Otherwise, it reads the next block of $R$ in the sequential scan and sends to the join operator. 
The reward of each block of $R$ is the number of joint tuples it produces during its $N$-Failure exploration. 
For each accessed block of $R$ with non-zero reward, $R$-scan maintains its address on the secondary storage and reward in a reward table in the main memory. 
\end{comment}

\noindent{\bf Exploring Sufficiently Many Tuples of $R$.}
Ideally, we would like to estimate the reward of every tuple in $R$ using $N$-Failure and pick the most rewarding one for exploitation.
However, as explained in Section~\ref{qp:challenges}, relation $R$ may contain numerous tuples.
This will cause $R$-scan to spend significant time on learning rewards before exploiting that knowledge.
Moreover, as the total reward gained by a single tuple might be limited, it might not be an efficient strategy to spend too much time on learning the tuple with the highest reward.
Thus, it might be more efficient to follow a more greedy approach in picking the most rewarding tuple in $R$ for exploitation.
To learn a set of promising tuples quickly, $R$-scan estimates the rewards of a random subset of tuples in $R$.
%As explained in Section~\ref{qp:challenges}, we would like $R$-scan to learn accurate estimates for tuples in $R$ without many costly random accesses.
Since we do not assume any specific order of tuples in $R$ based on attributes involved in the join, e.g., sorted order, we may safely assume that the sequential scan of $R$ delivers random samples of its tuples \cite{Hellerstein:1997:OA:253260.253291}. 
Thus, $R$-scan sequentially scans $R$ to get a random sample of tuples from $R$.  

We have tried two method to determine when to stop exploring and estimating rewards of tuples in $R$.
In the first method, we stop exploration as soon as we find a tuple with a sufficiently large estimated reward, e.g., non-zero. 
The second method explores exactly $M$ number of randomly selected tuples in $R$ and picks the most rewarding one for exploitation.
We select $M$ to be significantly smaller than the number of tuples in $R$ but sufficiently large to be able to estimate a promising tuple.
%Research in exploration of infinite action spaces indicate that $M=O(\sqrt{|S|})$ will give near-optimal guarantees. 
Our initial empirical studies indicate that the latter approach delivers more efficient strategies.
Hence, we use this approach in this paper.
We show how to find a reasonable value for $M$ in Section~\ref{sec:singleLearning:analysis}.
During exploration, the reward and the address of each explore tuple of $R$ are stored in a table (reward table) in main memory.

\noindent{\bf Exploitation.}
After the exploration step in each super round, $R$-scan reads the address of the most rewarding tuple of $R$ from the reward table.
It then joins the tuple with the entire $S$, i.e., exploitation.
In this step, $S$-scan sequentially scans $S$ from the beginning.
If the reward signal is sparse, e.g., the join is very selective, in some super-rounds, the rewards of all explored tuples might be zero. 
In this case, $R$-scan picks its last scanned tuple that is already in main memory for exploitation. 
%It stores the information of this block in reward table to keep track of the blocks that have been fully exploited.

\noindent{\bf Continue Learning During Exploitation.}
As the exploration step estimates the reward of tuples of $R$ using a small subset of $S$, its estimates might not be accurate.
Thus, $R$-scan keeps updating the reward of the selected tuple during exploitation.
As soon as the updated reward of the currently exploited tuple becomes less than some other tuple in the reward table, $R$-scan replaces the tuple with the most rewarding tuple of $R$ and continues the join.
To save time, $S$-scan continues its current sequential scan of $S$. 
If needed, $S$-scan restarts its sequential scan to perform the full join with the currently exploited tuple from $R$.
To simplify the exposition of our algorithm, we do not include this feature in Algorithm~\ref{alg:singleLearning}.  

\noindent{\bf Subsequent Super Rounds.}
After each exploitation step, the reward table contains the information of $M-1$ explored tuples that have not been fully exploited.
$R$-scan randomly selects a fresh tuple of $R$ by sequentially scanning the next tuple of $R$.
It estimates the reward of the new tuple using $N$-Failure exploration technique and adds its information to the reward table.
Now, it has enough information to pick the next most rewarding tuple of $R$ for the subsequent exploitation step.

%------------------------------------------------------------------------------------------------
\subsection{Granularity of Actions}
Given enough available main memory, instead of using each tuple as an action, we use a sufficiently large sequence of tuples, i.e., {\bf partition}, as actions. 
This is particularly useful where the amount of reward from one tuple may not be sufficiently large to determine the most rewarding tuple(s). 
For example, in a relatively selective join, each tuple of $R$ may produce a very small number of joint tuples. 
Due to the typical estimation errors during learning, an online learning algorithm may declare several tuples with different actual rewards to be equally rewarding.
Similarly, using large partitions than a single tuple for $S$ may improve the accuracy of reward estimation for partitions in $R$ as each partition of $R$ is checked with a relatively large sample of tuples from $S$.

Using large partitions for $R$ and $S$ may also produce more results during learning and $N$-failure exploration.
As $S$-scan reads the entire $S$ once for exploiting each partition of $R$, it also significantly reduces the total number of sequential scans by $S$-scan for exploitation. %\cite{DBBook}.
In our reported empirical studies in Section~\ref{qp:empirical}, we set each action to be a sequence of multiple tuples for both $R$ and $S$.
Of course, if the number of tuples in each partition is too large, exploration of each partition might take too long. 
This delays exploitation and increases the running time of the algorithm. 
In our empirical studies, we treat this number as a hyper-parameter and train it over a small dataset not used for our experiment results.
In following sections, to simplify our exposition, we assume that each action is a tuple.

%TODO: memory: multiple groups
%\noindent{\bf Expoiting Multiple Paritions.} If there is enough available memory, $R$-scan exploits multiple paritions with the highest rewards. This may reduce the number of  partition in memory after its $N$-Failure exploration. 
%In this case, there might not be enough memory to read and explore a fresh partition  keep all explored partitions, $R$-scan replaces 

\begin{comment}
If there is enough main memory available, operators may reduce the rounds of the join and subsequently the running time of the join by learning a list of rewarding tuples instead of only one in a super-round.
In this method, operator $R$ and $S$ will follow the learning phase of $k$ consecutive super-rounds without performing their subsequent join phases. 
At the end of this learning stage, operators will learn the $k$ most rewarding blocks in $R$. 
Then, the operators perform a single join phase for all the learned $k$ blocks by joining them with all blocks in $S$ in one sequential scan of $S$.
This methods reduce the number of scans of $S$ almost by a factor of $k$.
\end{comment}
%Another intriguing idea to learn the size of partitions during query processing. 
%One may leverage the ideas of combining actions and structured action space in MAB settings \cite{slivkins2019introduction}.
%These solutions are limited by the fact that some operators may need each partition to fit in the available main memory. 

%-------------------------------------------------------------------------------------------------
\subsection{Performance Analysis}
\label{sec:singleLearning:analysis}
Since we would like the learning to scale for relation $R$ with numerous tuples, it is reasonable to set $M$ to a sufficiently small value otherwise the learning of the most rewarding tuple may take significantly long time. 
On the other hand, small values of $M$ may not deliver sufficiently precise estimates. 
We theoretically analyze the proposed approach to find optimal values of $N$ and $M$.
We also use our analysis to compare OSL with similar algorithms. 

We investigate the {\it failure proportion} of the proposed method, which is the fraction of rounds during which the join does not produce any result. 
Due to the stochastic nature of our approach, we analyze its expected failure proportion.
We denote the probability of success, i.e., generating a result, for a tuple $i$ in $R$ as $p_i$.
We assume that probabilities of success of different tuples are independent and drawn from a distribution $F$.
%This assumption may not always hold as some tuples in $R$ may share the same value of the join attribute. Since some tuples $r_k$ share the same value of the join attribute, parameters $p_k$ are not independent. They are independent random variables where each $p_k$ having a known distribution $F_k$.
We also assume that $F$ is a uniform distribution in the interval $[a,b]$, $0 \leq a \leq b \leq 1$. 
Using other distributions introduces substantial complexity without creating significantly different insight. 
%The results generalize to arbitrary distributions in the range $[0,1]$.
%We refer interested readers to \cite{Berry:1997}.
In our empirical study in Section~\ref{qp:empirical}, we evaluate cases in which expected rewards of different tuples are drawn from distributions with various degrees of skewness. 
We denote the total number of tuples of $S$ as $\lvert S \rvert$.
The following proposition establishes a lower bound on the expected failure proportion in each super-round.
\begin{proposition}
\label{proposition:lowerBound}
The expected failure proportion in each super-round of the join has a lower bound of $(1-b) + $ $(b-a) \sqrt{\frac{2}{\lvert S \rvert}}$.
\end{proposition}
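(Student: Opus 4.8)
I would split one super-round of Algorithm~\ref{alg:singleLearning} into its \emph{exploration} part --- running $N$-Failure on $M$ tuples of $R$, which, since $R$ carries no join-relevant order and is scanned sequentially, can be treated as a uniform random sample --- and its \emph{exploitation} part --- joining the single estimated-best tuple against the whole of $S$ --- and then lower-bound the expected fraction of failing rounds over the super-round. Let $F_1,T_1$ be the numbers of failures and of rounds during exploration and $F_2,T_2$ their exploitation analogues; here $T_2=\lvert S\rvert$ always, and if $P^{\star}$ is the true success probability of the exploited tuple then $E[F_2]=(1-E[P^{\star}])\lvert S\rvert$. The quantity to bound is $E[(F_1+F_2)/(T_1+\lvert S\rvert)]$, which I would analyse as a function of $M$ --- modulo the routine care needed to move the expectation through the ratio, e.g.\ by conditioning on the explored sample and using that $T_1+\lvert S\rvert$ stays close to $\lvert S\rvert$ when $N$ is a small constant and $\lvert S\rvert$ is large.

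The exploitation term is handled by an order-statistics estimate. Because the $p_i$ are i.i.d.\ uniform on $[a,b]$, the largest of $M$ of them has mean $b-\frac{b-a}{M+1}$, so $E[F_2]=\big((1-b)+\frac{b-a}{M+1}\big)\lvert S\rvert$, and $1-P^{\star}\ge 1-b$ holds with certainty. In super-rounds after the first, the tuples carried over in the reward table are stochastically smaller than a fresh i.i.d.\ sample of the same size, because the previous maxima were removed; hence $E[1-P^{\star}]$ is only larger there, so a bound proved for the first super-round is valid for every super-round --- exactly the direction needed for a lower bound.

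For the exploration term, each of the $M$ runs of $N$-Failure must terminate on a run of failures, so it contributes a strictly positive, increasing-in-$M$ amount to $F_1$; bounding $E[F_1]$ and $E[T_1]$ using independence of the $p_i$ and linearity, and keeping only the leading behaviour in the large-$\lvert S\rvert$, constant-$N$ regime, leaves a lower bound of the form ``$(1-b)$ plus a term decreasing in $M$ (the suboptimality of the best-of-$M$ tuple, of order $\frac{b-a}{M+1}$) plus a term increasing in $M$ (the per-round price of having explored $M$ tuples, which should be of order $\frac{(M+1)(b-a)}{\lvert S\rvert}$ for the constants to work out)''. Minimising this over $M$: by AM--GM the decreasing and increasing terms balance at $M$ of order $\sqrt{\lvert S\rvert}$, where their sum equals $(b-a)\sqrt{2/\lvert S\rvert}$, giving the claimed bound $(1-b)+(b-a)\sqrt{2/\lvert S\rvert}$ --- and, as a by-product, a near-optimal value of $M$ that one should check is indeed $\ll \lvert S\rvert$.

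The exploitation side is routine; the exploration side is where the difficulty lies. Its length and failure count are random, the ``$N$ consecutive failures'' stopping rule makes the per-tuple counts slightly biased (the terminating run has $N$ failures, but earlier failures also occur), and the exploited tuple --- the argmax of the estimates --- is correlated with the exploration effort already spent, so passing to expectations cleanly needs care. The natural route is to retain only the terms that survive as $\lvert S\rvert\to\infty$ with $N$ fixed, so that the final constant is exactly $\sqrt{2/\lvert S\rvert}$ with no leftover dependence on $N$, and to use the assumed uniform form of $F$ so that the order-statistics and AM--GM steps are exact. The degenerate cases in Algorithm~\ref{alg:singleLearning} --- a super-round whose explored rewards are all zero (the last scanned tuple is then exploited) and early termination once $k$ results are produced --- should be noted, but each can only increase the failure proportion, so neither threatens the lower bound.
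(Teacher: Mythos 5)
Your proposal and the paper's proof diverge completely in form: the paper's entire argument is a two-line reduction --- each super-round performs at most $\lvert S \rvert$ join operations, a success for a tuple of $R$ is a success for the matched tuple of $S$, and the bound is then read off from Theorem~7 of \cite{Berry:1997}, which is a \emph{universal} lower bound on the expected failure proportion of \emph{any} strategy in an infinite-armed Bernoulli bandit with arm means drawn i.i.d.\ from the uniform distribution. Your calculation of the mechanism is essentially right and your constants check out: the best of $M$ uniforms on $[a,b]$ has mean $b-\frac{b-a}{M+1}$, each fresh tuple costs an extra $\frac{b-a}{2}$ expected failures beyond the $(1-b)$ floor on its first pull, and AM--GM balances $\frac{b-a}{M+1}$ against $\frac{(b-a)(M+1)}{2\lvert S\rvert}$ at $M\sim\sqrt{\lvert S\rvert/2}$ to give exactly $(b-a)\sqrt{2/\lvert S\rvert}$. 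This is, in fact, the same trade-off that drives the proof of the cited theorem.

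The genuine gap is one of scope. You lower-bound the failure proportion of the specific $M$-run/OSL family and then minimise over the design parameter $M$. That establishes only that \emph{this family} cannot beat the bound. But the paper explicitly uses the proposition in the stronger sense --- ``Proposition~\ref{proposition:lowerBound} holds for every learning algorithm in $R$-scan that sequentially scans and explores tuples in $R$'' --- and that universality is what makes Proposition~\ref{proposition:mrun} meaningful as a near-matching upper bound. Your argument says nothing about strategies that stop exploring adaptively, interleave exploration and exploitation, return to previously seen tuples, or split exploitation across several tuples. To close the gap you must replace the fixed $M$ by the \emph{random} number $K$ of distinct tuples of $R$ that an arbitrary strategy touches during the super-round, lower-bound the expected failures simultaneously by $E[K]\cdot\frac{b-a}{2}$ (first-pull cost of each new tuple) and by $\lvert S\rvert\cdot E\bigl[\frac{b-a}{K+1}\bigr]$ (every play uses an arm no better than the best of $K$), and then optimise over the distribution of $K$ with a Jensen-type step. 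That is precisely the content of Theorem~7 in \cite{Berry:1997}; without it, your write-up proves a weaker statement than the one claimed. A secondary, fixable point: $E[P^{\star}]$ is bounded above by the order-statistic mean rather than equal to it (the algorithm selects by noisy estimate, not by true $p_i$), which fortunately only strengthens the lower bound.
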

\begin{proof}
The maximum number of join operations in each super-round is $\lvert S \rvert$. 
Each success of a tuple in $R$ is equivalent to success of its joined tuple in $S$.
The result follows from Theorem 7 in \cite{Berry:1997}. 
\end{proof}

Proposition~\ref{proposition:lowerBound} holds for every learning algorithm in $R$-scan that sequentially scans and explores tuples in $R$. %as far as $S$-scan uses a random strategy. 
To get a clear understanding of the result of Proposition~\ref{proposition:lowerBound}, let $b=1$ and $a=0$. 
Proposition~\ref{proposition:lowerBound} indicates that a lower bound on the expected failure proportion of every learning method is $\sqrt{\frac{2}{\lvert S \rvert}}$. 
This lower bound comes from the inherent difficulty of learning the most rewarding tuple while processing the join and the restriction on the access method of $R$-scan to its actions, i.e., sequential scan.

Next, we ask whether the learning algorithm explained in the preceding section will achieve an expected failure proportion close to the aforementioned lower bound. 
The following proposition shows that it is enough to use a random sample of $R$ with a modest size to learn a tuple with an expected failure proportion close to lower bound of Proposition~\ref{proposition:lowerBound}. 
It follows from Theorem 8 in \cite{Berry:1997}.
\begin{proposition}
\label{proposition:mrun}
If $R$-scan uses OSL with $N=1$ and $M = \sqrt{\lvert S \rvert (b-a)}$, the expected failure proportion of a super-round of the join is less than or equal to $(1-b) + $ $2 \sqrt{\frac{(b-a)}{\lvert S \rvert}}$ asymptotically. 
\end{proposition}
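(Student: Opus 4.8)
The plan is to recognize a single super-round of OSL as an instance of the infinitely-many-armed Bernoulli bandit analyzed by Berry et al.~\cite{Berry:1997} and then invoke their Theorem~8. First I would fix the dictionary: each tuple (or partition) of $R$ sampled during the super-round is an ``arm'' with success parameter $p_i$, drawn i.i.d.\ from $\mathrm{Uniform}[a,b]$; a single ``pull'' of arm $i$ is the join of that tuple with one sequentially-and-hence-effectively-randomly scanned tuple of $S$, succeeding with probability $p_i$; and the per-super-round horizon is $n := \lvert S \rvert$, since Proposition~\ref{proposition:lowerBound} already observes that a super-round performs at most $\lvert S \rvert$ joins before the exploited tuple is exhausted. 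Under this dictionary, running $N$-Failure with $N=1$ on exactly $M$ randomly sampled tuples of $R$ and then committing to the empirically best one is precisely the ``$1$-failure / $m$-run'' strategy of~\cite{Berry:1997} with $m = M$. I would then note that the extra features of Algorithm~\ref{alg:singleLearning} not present in that idealized strategy --- continuing to update rewards and swap in a better tuple during exploitation, the fallback to the last scanned tuple when every sampled reward is zero, duplicate avoidance, and stopping early once $k$ results are produced --- can only decrease, never increase, the expected number of failure rounds, so it suffices to bound the idealized strategy.

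Next I would handle the shift and scale taking Berry et al.'s $\mathrm{Uniform}[0,1]$ setting to ours. Writing $p_i = a + (b-a) u_i$ with $u_i \sim \mathrm{Uniform}[0,1]$, the failure probability of arm $i$ is $1 - p_i = (1-b) + (b-a)(1-u_i)$. Hence every round incurs the irreducible baseline failure probability $1-b$ --- attained only in the limit, by an arm with $p_i$ near $b$ --- and the remaining \emph{learnable} part of the failure rate is exactly $(b-a)$ times the failure rate in the standard $\mathrm{Uniform}[0,1]$ problem. I would then count failures directly: the exploration of $M$ tuples under $1$-failure produces exactly $M$ terminal failures (the successes found along the way are returned to the user at no extra cost), while exploitation of the selected tuple --- whose parameter is the maximum of $M$ i.i.d.\ $\mathrm{Uniform}[a,b]$ draws, so $\mathbb{E}[p^\star] = b - (b-a)/(M{+}1)$ --- contributes about $n\big[(1-b) + (b-a)/M\big]$ failures. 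Since $M = \sqrt{\lvert S \rvert (b-a)} = o(\lvert S \rvert)$, the exploration pulls (each a geometric run of expected length $O(1/(1-b))$) form a vanishing fraction of the super-round, so the total number of rounds is $\lvert S \rvert\,(1+o(1))$.

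Combining, the expected number of failures in a super-round is at most $M + n(1-b) + n(b-a)/M + o(\sqrt{n})$, and dividing by $n = \lvert S \rvert$ gives an expected failure proportion of at most $(1-b) + M/\lvert S \rvert + (b-a)/M + o(1/\sqrt{\lvert S \rvert})$. The choice $M = \sqrt{\lvert S \rvert (b-a)}$ is exactly the minimizer of $M/\lvert S \rvert + (b-a)/M$, whose minimum value is $2\sqrt{(b-a)/\lvert S \rvert}$, yielding the claimed asymptotic bound $(1-b) + 2\sqrt{(b-a)/\lvert S \rvert}$; this is precisely Theorem~8 of~\cite{Berry:1997} transported through the rescaling, and it matches the lower bound of Proposition~\ref{proposition:lowerBound} up to the constant and the $(b-a)$-versus-$\sqrt{b-a}$ gap.

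The main obstacle I anticipate is making ``asymptotically'' precise rather than the reduction itself: controlling the expected geometric length of each $1$-failure exploration run (which is $O(1/(1-b))$ and degrades as $b \to 1$, so the $o(\cdot)$ terms carry a hidden $b$-dependence), handling the order-statistic expectation $\mathbb{E}[\max_i p_i]$ together with its concentration, reconciling the fact that $R$ is finite (so the arm pool is not literally infinite --- one needs $\lvert R \rvert \gg M$ and a with-replacement-style approximation of the sampling), and rigorously checking that OSL's ``continue learning during exploitation'' refinement does not perturb the count in the wrong direction. These are routine once the correspondence with~\cite{Berry:1997} is in place, but they are where the actual work sits; the clean reduction in the first paragraph is what makes the proposition follow essentially immediately from Theorem~8.
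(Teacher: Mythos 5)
Your proposal is correct and follows exactly the route the paper takes: the paper's entire proof is the single sentence ``It follows from Theorem 8 in Berry et al.,'' and your reduction (tuples of $R$ as arms with parameters i.i.d.\ $\mathrm{Uniform}[a,b]$, joins with $S$ as pulls, horizon $\lvert S\rvert$, the $1$-failure/$M$-run strategy, and the affine rescaling producing the $(1-b)$ baseline plus the optimized $M/\lvert S\rvert + (b-a)/M = 2\sqrt{(b-a)/\lvert S\rvert}$ term) is precisely the dictionary that citation presupposes. You have simply filled in the details the paper omits, and your flagged caveats (the $b$-dependence hidden in the $o(\cdot)$ terms, finiteness of $R$, and the exploitation-time refinements of OSL) are the right places where additional rigor would be needed.
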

Again, to get a better understanding of the result of Proposition~\ref{proposition:mrun}, let $b=1$ and $a=0$. 
Ignoring additive and multiplier constants, for $M=$ $\sqrt{\lvert S \rvert}$, the expected failure is at most $\frac{2}{\sqrt{\lvert S \rvert}}$.
%To compare the asymptotic guarantees of Proposition~\ref{proposition:mrun} with those of similar methods, 
Using this analysis, in our empirical study we set the value of $M$ to $O(\sqrt{\lvert S \rvert})$.
This analysis indicates that with a sufficiently small value for $N$ one may achieve a low expected failure proportion in each super-round.
To support other reward distributions and handle sparsity of rewards, we choose slightly larger values than $1$ for $N$ not to miss promising tuples by training over a small range of values for $N$ and set $N=10$ for the reported empirical study.
%We have trained $N$ for a range of relatively small numbers, e.g., fewer than 15, in our empirical studies.

\noindent{\bf Comparison with Other Algorithms.}
Current algorithms that handle join predicates with complex conditions check each available tuple (partition) of $R$ with each available tuple (partition) of $S$ to generate join results.
For example, nested loop ripple join, iteratively reads tuples (partitions) of $R$ and $S$ to main memory and joins every tuple of $R$ with every tuple of $S$ \cite{Haas:1999:RJO:304182.304208}.
All versions of the nested loop join algorithm also join every tuple (partition) of $R$ and $S$. 
These algorithms pick each tuple (partition) of $R$ using a sequential scan and join this tuple with every tuple in $S$ via a sequential scan over $S$.
Thus, we can model these algorithms as a series of super-rounds where the most rewarding tuple of $R$ is selected randomly from $R$.
Given the aforementioned definitions for $a$ and $b$, these algorithms have the expected failure proportion of $1-\frac{a+b}{2}$, e.g.,  0.5 for $b=1$ and $a=0$, in each super-round.
For a sufficiently large $\lvert S \rvert$, our method has about $\frac{b-a}{2}$, e.g., 0.5 for $b=1$ and $a=0$, less expected failure proportion in each super-round than these methods.
This is indeed interesting as our method learns its strategy online by checking a rather small fraction of base relations.

\noindent{\bf Number of Scans for $R$ and $S$.}
During the algorithm, $R$-scan performs a single sequential scan of $R$.
$R$-scan terminates this sequential scan as soon as the desired number of joint tuples are produced. 
It may also perform some random accesses to pick the tuples with the most estimated rewards exploitation.
$S$-scan sequentially scans $S$ to provide randomly selected tuples from $S$ for $N$-Failure exploration and reward estimation.
As users need a subset of the join results in our setting and the value of $N$ is relatively small, the total sequential accesses to $S$ for $N$-Failure explorations in all super-rounds may not exceed one complete sequential scan of $S$.
$S$-scan performs a sequential scan per each exploitation (full join) in a super-round.
%TODO: Optimize access for exploration
%\noindent{\bf Number of Scans for $R$ and $S$.} Relation $S$ often resides on external memory. Since each I/O data access usually take a relatively long time. Thus, if we reduce the number of I/O accesses needed for exploration, it may considerably improve the running time of the algorithm. To do this, $S$-scan sequentually reads a fixed subset of tuples from $S$ in memory and uses them for the $N$-Failure for all tuples of $R$ in all super-rounds. Our initial empirical results suggest that $2N$ tuples from $S$ are sufficient for this purpose. This technique almost eliminates the the sequential scan of $S$-scan for exploration.

%TODO: PUT THE PROPOSITION HERE
%Huazheng -- proposition

\begin{proposition}%{\color{blue}Is brute force the right name?}
Nested loop performs full join of every tuple in $R$ with entire $S$. It requires $\lvert R \rvert\lvert S \rvert$ join operations in total and $S$ is revisited $\lvert R \rvert$ times.  Assuming the tuples in $R$ are randomly distributed. Let $b=1, a=0$. In each super-round of  $M$-run exploration with $N=1$ and $M=\sqrt{\lvert S \rvert}$, the expected number of join operations is at most $\lvert S \rvert - \sqrt{\lvert S \rvert}$. If the algorithm performs $k$ super-rounds for exploration, $S$ will be revisited at most $k\frac{\lvert S \rvert - \sqrt{\lvert S \rvert}}{\lvert S \rvert}$ times. 
\end{proposition}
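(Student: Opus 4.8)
The two statements about the nested loop are immediate from its definition and need no argument: the (block) nested loop join with $R$ as the outer relation scans all of $S$ once for every tuple of $R$, so it performs $\lvert R\rvert\lvert S\rvert$ join operations in total and makes $\lvert R\rvert$ passes over $S$. The substance of the proposition is thus the bound on the exploration cost of one super-round of OSL and its corollary for the number of passes over $S$.

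For the exploration bound, I would track, for the $i$-th tuple $t_i$ of $R$ processed during a super-round ($1\le i\le M$), the random variable $J_i$ giving the number of join operations spent on $t_i$ under the $N=1$ failure rule. Under the modeling assumptions of Section~\ref{sec:singleLearning:analysis} --- the order in which $S$-scan delivers tuples is unrelated to the join predicate, and the success probabilities $p_i$ of the tuples of $R$ are i.i.d.\ draws from the uniform distribution on $[a,b]=[0,1]$ --- each successive join of $t_i$ with the next tuple of $S$ succeeds with probability $p_i$, so conditioned on $p_i$ the count $J_i$ is the number of Bernoulli$(p_i)$ trials up to and including the first failure, truncated at $\lvert S\rvert$ (there are only $\lvert S\rvert$ tuples in $S$, and within a super-round $S$-scan shares one sequential pass, so $\sum_{i=1}^M J_i\le\lvert S\rvert$ holds deterministically). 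Hence $\mathbb{E}[J_i\mid p_i]=\frac{1-p_i^{\lvert S\rvert}}{1-p_i}$, and integrating over $p_i\sim\mathrm{Unif}[0,1]$ gives $\mathbb{E}[J_i]=\sum_{j=1}^{\lvert S\rvert}\frac1j=H_{\lvert S\rvert}$, the $\lvert S\rvert$-th harmonic number, for the first explored tuple, and no more than $H_{\lvert S\rvert}$ for later ones since they share a smaller remaining portion of the scan --- this is exactly the per-arm ``run'' length that underlies the $M$-run analysis invoked for Proposition~\ref{proposition:mrun} (Theorem~8 of \cite{Berry:1997}). By linearity, the expected number of join operations performed during the exploration of a super-round is at most $\sum_{i=1}^M\mathbb{E}[J_i]\le M\,H_{\lvert S\rvert}$. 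Substituting $M=\sqrt{\lvert S\rvert}$ and using $H_{\lvert S\rvert}=\Theta(\log\lvert S\rvert)=o(\sqrt{\lvert S\rvert})$, we get $M\,H_{\lvert S\rvert}=\sqrt{\lvert S\rvert}\,H_{\lvert S\rvert}\le\lvert S\rvert-\sqrt{\lvert S\rvert}$ for all sufficiently large $\lvert S\rvert$ (equivalently, once $H_{\lvert S\rvert}\le\sqrt{\lvert S\rvert}-1$), which is the claimed bound.

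For the final claim, I would use that $S$-scan reads $S$ sequentially, so executing $T$ join operations against $S$ amounts to $T/\lvert S\rvert$ passes (``revisits'') of $S$. The previous paragraph bounds the expected number of exploration join operations in a single super-round by $\lvert S\rvert-\sqrt{\lvert S\rvert}$, i.e.\ at most $(\lvert S\rvert-\sqrt{\lvert S\rvert})/\lvert S\rvert$ passes of $S$ in expectation; summing over the $k$ super-rounds by linearity of expectation yields the stated $k\,\frac{\lvert S\rvert-\sqrt{\lvert S\rvert}}{\lvert S\rvert}$ bound.

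The step I expect to be the real obstacle is the first one: rigorously justifying that $J_i$ may be treated as a first-failure (geometric) count with parameter $p_i$. This is where the ``tuples of $R$ are randomly distributed'' assumption is genuinely used, together with either an exchangeability argument for a uniformly random scan order of $S$ or a with-replacement idealization of sampling tuples from $S$; the truncation at $\lvert S\rvert$ and the interaction between arms that share the per-super-round budget also need a little care. Everything after that --- computing $\mathbb{E}[J_i]=H_{\lvert S\rvert}$, the asymptotic comparison $\sqrt{\lvert S\rvert}\,H_{\lvert S\rvert}\le\lvert S\rvert-\sqrt{\lvert S\rvert}$, and the passes-of-$S$ corollary --- is routine. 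A shorter alternative is to read the expected $M$-run exploration length off the proof of Theorem~8 in \cite{Berry:1997} directly and then carry out only the asymptotic comparison.
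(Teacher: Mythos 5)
Your proof is correct, but it reaches the bound by a genuinely different route than the paper. The paper's proof combines two ingredients: from Theorem~8 of \cite{Berry:1997} it takes the \emph{expected number of arms tried} in a super-round to be $\sqrt{\lvert S \rvert}$, and it caps the cost of each tried arm by the \emph{worst-case} count of $\sqrt{\lvert S \rvert}-1$ successful joins --- the structural observation being that any arm with $\sqrt{\lvert S \rvert}$ consecutive successes would already be declared the ``good'' arm, so a non-good arm cannot exceed $\sqrt{\lvert S \rvert}-1$ successes before its terminating failure. Multiplying gives exactly $\sqrt{\lvert S \rvert}\bigl(\sqrt{\lvert S \rvert}-1\bigr)=\lvert S \rvert-\sqrt{\lvert S \rvert}$, with no asymptotics and no use of the uniform prior beyond what Theorem~8 already encodes. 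You instead fix the number of explored arms at $M=\sqrt{\lvert S \rvert}$ and compute the \emph{expected} per-arm cost under the uniform prior, obtaining the truncated-geometric mean $\mathbb{E}[J_i]=H_{\lvert S \rvert}$ and hence a total of $\sqrt{\lvert S \rvert}\,H_{\lvert S \rvert}$. That is an asymptotically much tighter bound than $\lvert S \rvert-\sqrt{\lvert S \rvert}$, but it buys this at the cost of (i) leaning directly on the uniform-$[0,1]$ prior rather than on a worst-case cap, and (ii) only yielding the stated inequality once $H_{\lvert S \rvert}\le\sqrt{\lvert S \rvert}-1$, i.e.\ for sufficiently large $\lvert S \rvert$, whereas the paper's product form matches the claimed expression identically. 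The two arguments also model the super-round slightly differently (you explore a fixed budget of $M$ arms; the paper's proof uses the run-based stopping rule of Berry et al.'s $m$-run strategy), but both are faithful readings of the proposition and both handle the nested-loop baseline and the revisits-of-$S$ corollary the same way, by dividing total join operations by $\lvert S \rvert$.
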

\begin{proof}
By Theorem 8 in \cite{Berry:1997}, the expected number of arms tried until finding a block of $R$ that joins with at least one tuple in the last $M=\sqrt{\lvert S \rvert}$ rounds is $ \sqrt{\lvert S \rvert}$. In the worst case of $S$, each of these tuples will successfully join with the block in $S$ for at most $\sqrt{\lvert S \rvert}-1$ times because otherwise it is the good tuple. Thus the total number of join operations in each super-round is at most $\sqrt{\lvert S \rvert}(\sqrt{\lvert S \rvert}-1) = \lvert S \rvert - \sqrt{\lvert S \rvert}$. By running $k$ super-rounds for exploration, $R$ will join with at most $k(\lvert S \rvert - \sqrt{\lvert S \rvert})$ blocks in $S$, resulting in $k\frac{\lvert S \rvert - \sqrt{\lvert S \rvert}}{\lvert S \rvert}$ revisits of $S$ for exploration. 
\end{proof}

%{%\color{blue}What would be $k$, i.e., how many super-rounds?}
We notice the benefits of our algorithm in terms of fewer scans of $S$ compared to brute force. The key insight is that brute force needs to perform full join with $S$ for every block in $R$, while our algorithm explores with fewer join operations in each super-round even in the worst case, and fewer revisits of $S$ overall. Note that in the best case, tuples in $S$ is distributed in a way that we may find the ``good'' tuple of $R$ early: the failure of the $\sqrt{\lvert S \rvert}-1$ tuples happens at the first join, followed by the success of $\sqrt{\lvert S \rvert}$ joins of the good tuple. The exploration takes only $\sqrt{\lvert S \rvert}-1 + \sqrt{\lvert S \rvert}$ join operations to find the good tuple.

\noindent{\bf Impact of Reward Distribution.}
Our analysis shows that the relative efficiency of this algorithm grows with the difference between $b$ and $a$. 
For instance, if $b=a$, as all tuples produce the same number of results, learning will not be useful no matter what method is used.
We show in Section~\ref{qp:empirical} that in cases the different tuples in $R$ have almost the same reward, our approach produces subsets of results almost as efficiently as comparable algorithms.

\section{Randomized OSL}
\label{sec:estimate}
Users often like to see accurate estimations of aggregation functions over samples of query results quickly \cite{Haas:1999:RJO:304182.304208,DBLP:conf/sigmod/ChaudhuriDK17,Agarwal:2013:BQB:2465351.2465355,Hellerstein:1997:OA:253260.253291}.
In particular, it is desirable to deliver progressively accurate results as the system produces output tuples \cite{Haas:1999:RJO:304182.304208,Hellerstein:1997:OA:253260.253291,DBLP:conf/sigmod/ChaudhuriDK17}.
Such accurate estimations usually require uniformly random samples of query results \cite{Haas:1999:RJO:304182.304208,Hellerstein:1997:OA:253260.253291,DBLP:conf/sigmod/ChaudhuriDK17}.
If the desired aggregation function is defined over attributes that do {\it not} belong to and are independent of the join attributes, OSL provides uniformly random samples.
In these cases, one may use available estimators some of which offer confidence intervals, e.g., \cite{Hellerstein:1997:OA:253260.253291}, to compute accurate estimates of the aggregation functions progressively.
Otherwise, as it {\it adaptively selects} tuples based on their rewards, current estimators may consider relatively few values from less frequently selected tuples in $R$.
As there are often too many such tuples, their estimates may be negatively biased \cite{doi:10.1073/pnas.2014602118}. 
Moreover, the observed rewards of tuples in $R$ may vary in different runs of the algorithm, e.g., due to a different order of tuples in $S$. 
This may lead to high variance in estimation. 

To address these challenges, in this section, we first propose a  modified version of OSL called {\it Randomized OSL} (ROSL),
We then design a consistent, unbiased, and asymptotically normal estimator over the results ROSL. 
Since the changes in the algorithm is minimal, the updated algorithm preserves the efficiency of our method.
Our estimator also delivers confidence intervals to inform users of how close the current estimates are to the accurate result with high probability.
This enables users to stop execution as soon as they deem the accuracy of estimate sufficient.

\begin{comment}
We follow ideas in \cite{doi:10.1073/pnas.2014602118}.
It proposes accurate estimators for MAB algorithms that work on a small number of arms.% and pick arms randomly using a probability distribution.
To mitigate bias and variance, it ensures that all arms have non-zero selection probability during adaptive selections. 
We show that our algorithm satisfies this condition after slightly changing the step of picking the most rewarding tuple in each super-round.
The estimator in \cite{doi:10.1073/pnas.2014602118} incorporates probabilities of selection in its estimation.
We leverage the same idea for our estimator.
We first design a mean estimator for the aggregation function.
Then, we use this estimator to approximate the value of the function.
\end{comment}

%------------------------------------------------------------------------------------------------------------
\subsection{Algorithm}
In this section, we modify OSL so that every tuple in $R$ has a non-zero chance of being selected for exploitation.
%We introduce a slightly modified version of $M$-Run called {\it randomized M-Run} ($RM$-Run for short) in which every tuple in $R$ has a non-zero chance of being selected for join during adaptive selections.
We first explain ROSL for $N=1$ and then describe its extension for $N > 1$.
We also explain the probabilities by which ROSL picks tuples from $R$.
These probabilities are useful for designing our estimator in Section~\ref{sec:estimate:estimator}.

%------------------------------------------------------------------------------------------------------------
\noindent{\bf Exploration Steps.}
We use the same steps of $N$-Failure explorations and reward estimations of OSL for ROSL.
As explained in Section~\ref{qp:challenges}, sequential scan provides a random sample of tuples of the underlying relation. 
Hence, during every $N$-Failure exploration, all tuples in $R$ have non-zero probability of being selected.
More precisely, the probability of choosing the first tuple $r \in R$ and trying it the first step of its $N$-Failure exploration is $\frac{1}{\mid R \mid}$.
%Because $M$-Run tries tuple $r$ more than once in its $N$-Failure exploration by considering its success only in the last trial,  
$N$-Failure tries $r$ in each additional step with the probability of $p_r$ where $p_r$ is the probability of generating a joint tuple from $r$. 
Otherwise, it switches to another tuple $u \in R$ selected uniformly at random from $R$.
The probability of choosing $u$ in its first step of $N$-Failure exploration is $\frac{1}{\mid R \mid}$ $(1 - p_r)$.
Similarly, the algorithm picks $u$ for additional tries during its $N$-Failure exploration with the probability $p_u$.
When picking $u$, it considers only the tuples of $R$ that have not been selected so far.
To ensure that every tuple has a non-zero probability of selection, one may also consider the previously accessed tuples when choosing $u$.
Nonetheless, because our goal is to generate a sample of join over large relations, the number of previously accessed tuples is usually significantly less than the ones that have not been tried. 
Thus, we do not change OSL selection method in this step.

%------------------------------------------------------------------------------------------------------------
\noindent{\bf Exploitation Steps.}
After finishing $N$-Failure explorations in each super-round $\Gamma$, OSL picks the highest rewarding tuple $r^*_{\Gamma}$ to join with $S$, i.e., {\it exploitation phase} of $\Gamma$.
This method may induced some bias toward $r^*_{\Gamma}$.
To mitigate this bias, in each step of the exploitation phase of $\Gamma$, ROSL picks tuples of $R$ that are explored in $\Gamma$ and its preceding super-rounds randomly proportional to their reward.
We set the rewards of tuples with zero reward to a fixed and relatively small value so they are selected with non-zero probability.
%To maintain the algorithm efficient, the exploitation phase of $\Gamma$ is done after its sequential scan of $S$ terminates.
%Because $r^*_{\Gamma}$ may not be joined with all tuples of $S$ during the exploitation phase of $\Gamma$, $RM$-Run keeps the information about $r^*_{\Gamma}$ in main memory after $\Gamma$ finishes.
%This tuple will be randomly selected in the super-rounds after $\Gamma$ until it joins with every tuple in $S$ and its information is discarded from main memory. 
%check: trade-off for going forward and staying in the current super-round (memory to keep info)
This exploitation phase in each super-round ignores tuples of $R$ that have not yet been explored.
Nonetheless, since it picks the explored tuples in each super-round uniformly at random, this method does not introduce any bias, e.g., similar to joins of randomly selected groups of tuples in ripple join \cite{Haas:1999:RJO:304182.304208,Hellerstein:1997:OA:253260.253291}.
As in OSL, we update the reward of the selected tuple during exploitation. If the reward of the selected tuple is less than the reward of some explored tuples in $R$, we randomly select another explored tuple from the reward table and continue exploitation.

%------------------------------------------------------------------------------------------------------------
\noindent{\bf Larger Values of $N$.}
%\label{sec:estimate:RMRun:largerN}
Assume that $N > 1$.
The $M$-Run algorithm joins each selected tuple in $R$ with $S$, $N$ times at the start of each $N$-Failure exploration.
Other tuples in $R$ have zero probability of being selected during these $N$ steps.
Nonetheless, these $N$ joins do not introduce any bias to the sample.
First, each tuple for these trials is selected uniformly at random from $R$.
Second, they are performed for all these tuples equal number of times. 
They are equivalent to trying each tuple $N$ times before adaptive selections begin. 
Hence, we leave these parts of our algorithm unchanged.
Following the calculation in the preceding paragraph, the probability of trying tuple $r$ after the first $N$ joins in its $N$-Failure exploration is $1 - (1 - p_r)^{N}$.
Similarly, the probability of switching to a new tuple $u$ is $\frac{1}{\mid R \mid}$ $\times (1 - p_r)^{N}$.
Since value of $p_r$ are not known beforehand, we use the empirical observation from the join of $r$ with tuples in $S$ during $N$-Failure exploration to estimate $p_r$.

%------------------------------------------------------------------------------------------------------------
\subsection{Unbiased and Consistent Estimator}
\label{sec:estimate:estimator}
In this section, we design a consistent, unbiased, and asymptotically normal estimator for an aggregation function $Q(.)$, e.g., count or sum,  over the results of ROSL for $R \bowtie S$. 

\noindent{\bf Unbiased Mean Estimator for Each Tuple.}
We first design an unbiased estimator for the mean of $Q(.)$ computed only by join of tuple $r$ in $R$ with $S$, denoted as $Q(r)$.
As explained in the preceding section, during both exploration and exploitation, ROSL may pick different tuples of $R$ with different probabilities at each step.
This may introduce bias to the sample of join results produced by ROSL.
We use the popular approach of inverse-selection-probability-weighting (ISPW) to correct that bias \cite{Horvitz1952AGO}.
Let ROSL picks $r$ at step $t$ of the algorithm with probability $e_t(r)$.
Let $Y_t(r)$ denote the value of the function from the joint tuple created by joining $r$ with a tuple in $S$ in step $t$ of ROSL.
Let $T$ be the fixed times when we compute and deliver updated estimates of the function to the user, e.g., every 100 steps.
The {\it history} of the algorithm up to time $t-1$, shown as $H^{t-1}$, is the sequence of joint tuples and their function values up to time $t-1$.
We define $\hat{\Gamma}_t(r)$ as $\frac{Y_t(r)}{e_t(r)}$ and estimate $Q(r)$ at time $T$ using $\hat{Q}_T(r) = $ $ \frac{1}{T} \sum_{t=1}^{T} \hat{\Gamma}_t(r)$.
Because given $r$ the samples of $Y_t(r)$ are independent, due to the linearity of expectation, $\hat{Q}_T(r)$ is unbiased. 

%To reduce the variance of $\hat{Q}_T(r)$, we use the idea of regression adjustment and redefine $\hat{\Gamma}_t(r)$ as $\frac{Y_t(r)}{e_t(r)}$ $+$ $(1 - \frac{1}{e_t(r)}) \hat{m}_t(r)$ where $\hat{m}_t(r)$ is a conditional estimate of $\mathbb{E}(Y_t(r))$ given $H^{t-1}$.We use the average of past values of $Y_t^'(r)$ to compute $\hat{m}_t(r)$ where $0 \leq t^' < t$. This change preserves the unbiasedness of $\hat{Q}_T(r)$.

%------------------------------------------------------------------------------------------------------------
\noindent{\bf Asymptotically Normal Mean Estimator for Each Tuple.}
%The observed rewards of $r$ and consequently $e_t(r)$ may vary across different runs of the algorithm based on the underlying data.
Using ISPW technique to estimate $\hat{Q}_T(r)$ may not deliver a deterministic and controlled variance and achieve asymptotic normality \cite{doi:10.1073/pnas.2014602118}.
To address this issue, we use adaptive weights and modify $\hat{Q}_T(r)$ as $\frac{ \sum_{t=1}^{T} h_t(r) \hat{\Gamma}_t(r) }{\sum_{t=1}^{T} h_t(r)}$ where $h_t(r) = $ $\sqrt{\frac{e_t(r)}{T}}$ following \cite{doi:10.1073/pnas.2014602118}.
The weights $h_t(r)$ compensate for uncontrolled trajectories of $e_t(r)$ and help to stabilize variance. 
%bha-mod
%Since $\sum_{t=1}^{T} h_t(r)$ $=1$, $\hat{Q}_T(r)$ is still
Since $\sum_{t=1}^{T} \frac{{h_t^2(r)}}{{e_t}}$ $=1$, $\hat{Q}_T(r)$ is still
%bha-end
unbiased.
%bha-mod
%The following is a direct result of Theorem 3 in
The following is a direct result of Theorem 2 in \cite{doi:10.1073/pnas.2014602118}.
\begin{proposition}
\label{proposition:meanestimate:onetuple}
$\hat{Q}_T(r)$ is consistent and $\frac{\hat{Q}_T(r) - Q(r)} {\hat{V}_T(r)^{1/2}}$ distributed asymptotically normally with mean of 0 and variance of 1 where
%bha-mod
%$\hat{V}_T(r)$ $=$ $ \frac{ \sum_{t=1}^{T} %h_t^2(r) \hat{\Gamma}_t(r) - \hat{Q}_T(r) } { %(\sum_{t=1}^{T} h_t(r))^2 }$.
$\hat{V}_T(r)$ $=$ $ \frac{\sum_{t=1}^{T} h_t^2(r) \left(\hat{\Gamma}_t(r) - \hat{Q}_T(r)\right)^2}{\left( \sum_{t=1}^{T} h_t(r) \right)^2}$.
%bha-end
\end{proposition}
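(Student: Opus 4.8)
The plan is to recognize $\hat{Q}_T(r)$ as exactly the adaptively-weighted augmented-inverse-probability estimator analyzed by Theorem~2 in \cite{doi:10.1073/pnas.2014602118}, and to discharge that theorem's hypotheses against the ROSL construction of Section~\ref{sec:estimate}. Write $H^{t-1}$ for the history before the $t$-th join. First I would check predictability: the formulas for $e_t(r)$ derived in Section~\ref{sec:estimate} express the selection probability purely in terms of the order in which tuples of $R$ have been scanned and the empirical success counts observed so far, all $H^{t-1}$-measurable, so $e_t(r)$ and hence $h_t(r)=\sqrt{e_t(r)/T}$ are predictable. Next I would argue that, conditioned on $H^{t-1}$ and on ROSL selecting $r$ at step $t$, the value $Y_t(r)$ is the aggregation function evaluated on a uniformly random as-yet-unjoined tuple of $S$; for $|S|$ large these behave as i.i.d.\ with common mean $Q(r)$ and bounded second moment (the function is evaluated on tuples of finite relations), the without-replacement correction being asymptotically negligible exactly as in the ripple-join analysis \cite{Haas:1999:RJO:304182.304208}. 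This gives $E[\hat{\Gamma}_t(r)\mid H^{t-1}] = e_t(r)\cdot Q(r)/e_t(r) = Q(r)$, so $h_t(r)(\hat{\Gamma}_t(r)-Q(r))$ is a martingale-difference sequence and the numerator of $\hat{Q}_T(r)-Q(r)$ is a martingale.

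For consistency I would bound the conditional variance of the $t$-th increment by $h_t^2(r)\,E[Y_t(r)^2\mid H^{t-1},\,r\text{ chosen}]/e_t(r) = O(1/T)$, using the identity $h_t^2(r)/e_t(r)=1/T$; summing over $t\le T$ shows the numerator has variance bounded uniformly in $T$, hence is $O_p(1)$. The denominator $\sum_{t\le T} h_t(r) = T^{-1/2}\sum_{t\le T}\sqrt{e_t(r)}$ diverges at rate $\sqrt{T}$ because ROSL's selection probabilities are bounded below by a positive constant whenever $r$ is still in the reward table --- during $N$-Failure exploration by the sequential-scan bound $1/|R|$, and during exploitation because zero-reward tuples are floored to a fixed small positive weight. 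Dividing, $\hat{Q}_T(r)-Q(r)\to 0$ in probability, and together with unbiasedness this yields consistency.

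For asymptotic normality I would invoke the martingale central limit theorem in the form used by Theorem~2 of \cite{doi:10.1073/pnas.2014602118}: the variance-stabilizing weights $h_t(r)=\sqrt{e_t(r)/T}$ make each increment uniformly $O(T^{-1/2})$ in conditional second moment, which under the bounded-moment assumption above gives the Lindeberg condition, and they make the aggregated conditional variance $\sum_{t\le T}h_t^2(r)\,\mathrm{Var}(\hat{\Gamma}_t(r)\mid H^{t-1})$ converge. Then the studentized ratio $(\hat{Q}_T(r)-Q(r))/\hat{V}_T(r)^{1/2}$, with the stated sample analogue $\hat{V}_T(r)=\sum_{t\le T}h_t^2(r)(\hat{\Gamma}_t(r)-\hat{Q}_T(r))^2 / (\sum_{t\le T}h_t(r))^2$, is asymptotically $N(0,1)$ by Slutsky once $\hat{V}_T(r)$ is shown to consistently estimate the conditional variance --- the self-normalization step of that theorem, which I would simply cite.

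The main obstacle I anticipate is in the verification rather than the invocation: establishing that the repeated joins of a fixed tuple $r$ with successive tuples of $S$ genuinely form a conditionally i.i.d., bounded-moment, martingale-difference stream despite (i) $S$ being sampled without replacement and (ii) $r$ possibly being exhausted or removed from the reward table at some step, which forces all sums to range over the random index set $\{t: e_t(r)>0\}$. Handling (i) needs the same ``$|S|$ large, so the without-replacement bias vanishes'' argument used throughout progressive-join work, and handling (ii) amounts to noting that once $r$ leaves the reward table it contributes no further terms, so the martingale (and hence the stopped-time CLT) structure is preserved on that index set. Everything else is a direct substitution into \cite{doi:10.1073/pnas.2014602118}.
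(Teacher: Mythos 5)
Your proposal takes essentially the same route as the paper: both reduce the claim to the adaptively-weighted estimator theorem of \cite{doi:10.1073/pnas.2014602118} and verify its hypotheses, chiefly that the samples $Y_t(r)$ are conditionally independent given $r$ and that every tuple retains a strictly positive (and convergent) selection probability $e_t(r)$. Your version is considerably more explicit about the martingale structure, the role of the weights $h_t(r)$, and the without-replacement and tuple-exhaustion caveats, all of which the paper's three-sentence proof passes over, but the underlying argument is the same.
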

\begin{proof}
The samples of $Y_t(r)$ are independent given $r$.
To satisfy the conditions of Theorem 3 in \cite{doi:10.1073/pnas.2014602118}, we show that each tuple has at least a fixed non-zero probability of being selected and the probability of choosing for every tuple converges to a value in $[0,1]$.
First, we have $e_t(r) >0$.
Second, each tuple $r$ is explored in some super-round.
$e_t(r)$ in each examined super-round converges to the probability computed in ROSL based to its reward by time $T$.
%$\hat{m_t(r)}$ is consistent, i.e., $\hat{m_t(r)} \rightarrow Q(r)$
\end{proof}

%------------------------------------------------------------------------------------------------------------
\noindent{\bf Mean Estimator for The Function.}
Let $Q$ be the mean of the aggregation function computed using all tuples.
Let $T_r$ denote the number of times the algorithm tries and joins a tuple $r$ during $T$ steps of the algorithm. 
We estimate the mean as $\hat{Q}_T =$ $\frac{\sum_r T_r \hat{Q}_T(r)}{T}$. 
%Using Proposition~\ref{proposition:meaneatimate:onetuple}, we prove that $\hat{Q}_T$ is asymptotically normal.
\begin{theorem}
\label{theorem:asymtoticnormal}
$\hat{Q}_T$ is consistent and $\frac{\hat{Q}_T - Q} {\frac{\sum_{r} T_r \hat{V}_T(r)^{1/2}}{T}}$ is distributed asymptotically normally with mean of 0 and variance of 1.
%bha-mod
%where $T_r$ is the number of joint tuples produced using $r$ %in the sample.
%bha-end
\end{theorem}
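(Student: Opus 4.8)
The plan is to realize $\hat{Q}_T$ as a weighted average of the per-tuple estimators handled by Proposition~\ref{proposition:meanestimate:onetuple} and to lift its conclusions to the aggregate. Write $w_r = T_r/T$, so $\sum_r w_r = 1$ and $\hat{Q}_T = \sum_r w_r\,\hat{Q}_T(r)$; by the definition of $Q(r)$ as the mean of $Q(\cdot)$ over joins of $r$, the target decomposes the same way in the limit, $Q = \sum_r w_r\,Q(r)$, so that $\hat{Q}_T - Q = \sum_r w_r\bigl(\hat{Q}_T(r) - Q(r)\bigr)$ is a weighted sum of per-tuple errors, each of which Proposition~\ref{proposition:meanestimate:onetuple} controls. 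The whole argument is then an exercise in propagating consistency and asymptotic normality through this sum while handling the fact that the weights $w_r$ are themselves random and the index set over $r$ grows with $T$.

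For \emph{consistency}, Proposition~\ref{proposition:meanestimate:onetuple} gives $\hat{Q}_T(r) \xrightarrow{p} Q(r)$ for every explored tuple $r$. Since $w_r \in [0,1]$ and $\sum_r w_r = 1$, it suffices to bound the joint contribution of rarely-tried tuples; here I would reuse the two facts established in the proof of Proposition~\ref{proposition:meanestimate:onetuple} — that every explored tuple keeps a fixed non-zero selection probability and that these probabilities converge — which keep the increments $\hat{\Gamma}_t(r)$ bounded in $L^2$, so a uniform-integrability / dominated-convergence argument over the growing set of tried tuples yields $\sum_r w_r(\hat{Q}_T(r)-Q(r)) \xrightarrow{p} 0$, i.e.\ $\hat{Q}_T \xrightarrow{p} Q$.

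For \emph{asymptotic normality}, by Proposition~\ref{proposition:meanestimate:onetuple} each term satisfies $\hat{Q}_T(r) - Q(r) = \hat{V}_T(r)^{1/2} Z_r + o_P\!\bigl(\hat{V}_T(r)^{1/2}\bigr)$ with $Z_r \xrightarrow{d} N(0,1)$, and every step $t$ of ROSL contributes its observation to exactly one $\hat{Q}_T(r)$, so the aggregate error is a single martingale-type sum over the steps. I would combine the per-tuple limits via a triangular-array (Lindeberg--Feller / martingale) CLT applied to $\{w_r(\hat{Q}_T(r)-Q(r))\}_r$; the Lindeberg condition holds because $\max_r w_r \to 0$ — a fresh tuple enters the reward table in every super-round, so no explored tuple is exploited a constant fraction of the time — which makes the largest summand asymptotically negligible relative to the total scale. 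The scale itself I would bound by the triangle inequality for $L^2$ norms, $\bigl\|\hat{Q}_T - Q\bigr\|_2 \le \sum_r w_r \bigl\|\hat{Q}_T(r)-Q(r)\bigr\|_2 \approx \sum_r w_r\,\hat{V}_T(r)^{1/2} = \frac{\sum_r T_r \hat{V}_T(r)^{1/2}}{T}$, which is precisely the denominator in the statement and is robust to the correlation across per-tuple estimators; Slutsky's theorem together with the consistency of each $\hat{V}_T(r)$ (again Proposition~\ref{proposition:meanestimate:onetuple}) then replaces the true scale by this quantity and delivers the stated $N(0,1)$ limit.

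The main obstacle is exactly the interaction of the \emph{random, data-dependent weights} $w_r = T_r/T$ with the \emph{growing number of tried tuples} and with the \emph{cross-tuple dependence} induced by reusing the scan order of $S$ across super-rounds: different $\hat{Q}_T(r)$ are not conditionally independent once one conditions only on the $R$-side choices. This is why the natural normalizer is the weighted sum of per-tuple standard deviations (the triangle-inequality bound that absorbs any correlation) rather than the root-sum-of-variances $\bigl(\sum_r w_r^2 \hat{V}_T(r)\bigr)^{1/2}$. The delicate points I would spend the most care on are: (i) showing the empirical weight profile stabilizes enough and $\max_r w_r \to 0$, using the convergence of ROSL's selection probabilities; (ii) verifying the Lindeberg/joint-CLT condition uniformly over the array; and (iii) reconciling ``variance exactly $1$'' with this conservative normalizer — equality is exact only when the per-tuple errors are asymptotically perfectly correlated, and in general the ratio has asymptotic variance at most $1$, so the resulting confidence intervals are valid and at worst conservative, which is the precise form in which I would state (or caveat) the result.
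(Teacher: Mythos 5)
Your decomposition $\hat{Q}_T - Q = \sum_r w_r\bigl(\hat{Q}_T(r)-Q(r)\bigr)$ with $w_r = T_r/T$ is exactly the route the paper takes: its proof also invokes Proposition~\ref{proposition:meanestimate:onetuple} per tuple, rescales by $T_r$, recombines across tuples, and then appeals to a CLT for dependent samples to excuse the cross-tuple dependence you identify. Where you differ is in rigor rather than strategy. The paper treats $T_r$ as ``a fixed value,'' assumes the per-tuple function values are independent, and combines the limits via a one-line appeal to the Portmanteau theorem; you instead treat $\{w_r(\hat{Q}_T(r)-Q(r))\}_r$ as a triangular array with random, data-dependent weights and a growing index set, and supply the Lindeberg-type condition ($\max_r w_r \to 0$) that the paper never checks. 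That extra care buys you the one observation that matters most: the theorem's normalizer $\sum_r T_r \hat{V}_T(r)^{1/2}/T$ is a weighted sum of per-tuple \emph{standard deviations}, which equals the standard deviation of the weighted sum only under perfect positive correlation of the per-tuple errors; under the independence the paper itself assumes, the correct scale would be $\bigl(\sum_r w_r^2 \hat{V}_T(r)\bigr)^{1/2}$, and the stated ratio has asymptotic variance at most $1$ rather than exactly $1$. The paper's proof obscures this by writing ``variance $T_r \hat{V}_T(r)^{1/2}$'' --- conflating variance with standard deviation --- and then summing those quantities, so it does not actually establish the unit-variance claim either. Your conservative reading (valid, at-worst-wide confidence intervals) is the defensible form of the result, and your caveat should be regarded as a correction to the theorem statement, not a gap in your own argument.
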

\begin{proof}
In the first super-round, the algorithm picks a set tuples from $R$ uniformly at random using sequential scan.
Assume that the algorithm is in the first super-round.
Using Proposition~\ref{proposition:meanestimate:onetuple}, for tuple $r$ from $R$ in the current super-round, $\hat{Q}_T(r) - Q(r)$ converges in distribution to $\mathcal{N}(0, \hat{V}_T(r)^{1/2})$.
Because multiplying by a fixed value is a continuous function, $W(r) = T_r$ $\hat{Q}_T(r) - Q(r)$ converge to a normal with mean $0$ and variance $T_r \hat{V}_T(r)^{1/2})$.
As the aggregation function values of tuples are independent and average is a continuous function, by applying Portmanteau theorem, the average of all $W(.)$ values for the tuples of $R$ in the current super-round is distributed asymptotically normally with mean $0$ and variance $\frac{\sum_{r} n(r) \hat{V}_T(r)^{1/2}}{T}$.
The algorithm moves to a subsequent super-round by randomly selecting tuples from $R$.
If it picks tuples from the first super-round, the tuple will be picked randomly. 
Thus, the convergence holds for subsequent steps of the algorithm.
Because each tuple of $R$ joins with at least $N$ tuples of $S$, the selected random samples by the algorithm from $R \bowtie S$ are not independent.
Thus, similar to \cite{Haas:1999:RJO:304182.304208,Hellerstein:1997:OA:253260.253291}, we use the central limit theorem for dependent samples for the above convergence results. %using delta-method
%Another method: the set of sampled joint tuples is a weighted random dependent sample from the entire set of joint tuples of $R$ and $S$. Using a similar argument as the one used in \cite{Wander-Join}, the total value of aggregation function will be the weighted average of the value of aggregation function for every try of a tuple. Thus the estimator is unbiased and asymptotically normal.
\end{proof}
\noindent
%The aforementioned asymptotic normality enables our method to report progressively smaller confidence intervals as it produces more query samples.
As variance vanishes when $T \rightarrow \infty$, the estimator is consistent, i.e., gets arbitrary close to the function mean given enough samples.
We use the results of Theorem~\ref{theorem:asymtoticnormal} and construct $p$ confidence interval $P(\hat{\mu} - \epsilon_n < \mu < \hat{\mu} + \epsilon_n)$ $= p$ where $\epsilon_n =$ $z_p$ $\sum_{r} \hat{V}_T(r)^{1/2}$ and the area under the standard normal curve between $-z_p$ and $z_p$ is $p$.
We set $p=0.95$ in our experiments.
%$\epsilon_n =$ $\frac{z_p \sigma}{\sqrt{n}}$
%Since the estimator is asymptotic normal, it is unbiased.

%------------------------------------------------------------------------------------------------------------
\noindent{\bf Estimating The Function Value.}
One may estimate the value of the function from its mean estimate by normalizing the current estimate \cite{Haas:1999:RJO:304182.304208}.
For instance, to estimate {\it count}, we multiply its estimated mean by $\frac{\mid R\mid \mid S \mid}{\mid J \mid}$ where $J$ is the number of current sample.
\section{Collaborative Scans}
\label{sec:collaboration}
\noindent{\bf Benefit of Collaborative Learning.}
As explained in Section~\ref{qp:challenges}, in $m$ to $n$ joins of $R$ and $S$, some tuples of $S$ may join with a considerable number of tuples in $R$.
Thus, it may improve the efficiency of the join if $R$-scan and $S$-scan both use adaptive strategies and {\it collaboratively learn} an efficient join strategy.

\noindent{\bf Challenge of Collaborative Learning.}
In collaborative learning, \ul{each scan must both learn the reward of its tuples and provide randomly sampled tuples for the other scan so they both learn accurate rewards}. 
This may double the number of data or I/O accesses by each scan during learning and slow down join processing significantly.
For instance, one approach is to run the join in two main phases. 
In the first one, $R$-scan uses a learning and $S$-scan uses a fixed random strategy as in Section~\ref{sec:singleLearning}. 
After exploring and exploiting all tuples of $R$, in the second phase, $S$-scan follows a learning strategy and $R$-scan a fixed random one.
In this method, the learning and data accesses in the second phase are not useful as all join results have been already produced by the first phase.
Hence, this simple algorithm does not use highly rewarding tuples in $S$ to generate results quickly.

%-----------------------------------------------------------------------------------------------------------------------------------
\subsection{Interleaving Learning and Sampling}
\label{sec:collaboration:inter}
%\noindent{\bf Interleaving Learning and Fixed Strategies.}
We propose a method to address the aforementioned challenge so both scans learn join strategies efficiently.
OSL (ROSL) needs a random subset of actions for exploration. 
Thus, the scan operator that follows this algorithm reads its tuples from its relation sequentially.
Also, the scan operator with the fixed strategy scans its relation sequentially to provide random samples.
Since both scans sequentially read their relations, we can conveniently {\it interleave} exploration/learning and fixed random sampling strategies for each scan.
This way, in each super-round, one scan performs an $N$-Failure exploration and the other one performs a sequential scan.
They will switch strategies in the subsequent super-round. 

More precisely, in a super-round, the scan that performs $N$-Failure estimates and collects rewards of its explored tuples. %as explained in Section~\ref{sec:singleLearning}.
After exploring the last tuple in the super-round, this scan decides its most rewarding tuple and fully joins with the other relation.
This scan will follow a fixed random strategy in the next super-round by continuing its sequential scan of its relation.
The scan with the fixed strategy in this super-round will perform $N$-Failure exploration and subsequently exploration in the next super-round.
%In the subsequent super-rounds, the scan operators follow the algorithm described in Section~\ref{sec:singleLearning:Order}.
The join will stop after generating the desired number of tuples.
The join operator avoids reporting duplicates using techniques in Section~\ref{qp:challenges}.

\begin{comment}
Each scan operator switches turns between OSL (ROSL) and random strategies.
As it sequentially reads some tuples during executing the random strategy, it may {\it not} resume its next $N$-Failure exploration immediately after the tuple that it has explored in its last $N$-Failure exploration. 
Hence, it may skip doing $N$-Failure for some tuples. 
To ensure that each scan explores every tuple, it has to {\it go back} to the position after its last $N$-Failure.
Nonetheless, the implementations of sequential scan in current relational data systems, e.g., PostgreSQL, usually starts from the beginning of the relation. 
%We have observed this in particular in PostgreSQL, which we have used to implement our algorithms.
Thus, to resume the sequential scan from the last tuple for which the scan has done $N$-Failure, it has to restart the scan from the beginning and read and skip potentially many block, especially towards the middle and end of the algorithm.
To save access time, in our current implementation, our scans do not go back on disk and continue their sequential scan when they switch turns.
%Therefore, our current implementation of the case where both scan operators learn may {\it not} produce the full join results.

\noindent{\bf Delay Between Alternation.} 
Instead of alternating strategies in each super-round, another possible method is to switch strategies after exploring and exploiting sufficiently many tuples in one relation, e.g., $R$.
%One may follow such an approach during the zig-zag between scans, e.g., exploring sufficiently many tuples of $R$ and then switching to $S$.
This approach may wait a long time to discover and use rewarding tuples in $S$. %, particularly after exploring the first $M$ tuples.
Hence, it may not offer significant improvement over the method where only one scan operator learns.
\end{comment}

\noindent{\bf Estimating Functions.} Both scans learn rewards for their tuples using random samples of the data and follow OSL (ROSL).
Hence, if each scan uses ROSL, the guarantees proved for the estimator in Section~\ref{sec:estimate} will extend for collaborative learning.
In this case, the estimator will use the results from both scans. 

%-----------------------------------------------------------------------------------------------------------------------------------
\subsection{Implicit Collaborative Learning} 
\label{sec:collaboration:implicit}
The interleaving approach requires a distinct set of explorations to learn rewarding tuples in each relation.
It doubles the amount of exploration compared to the case where only a single scan learns join strategies.
Next, we propose a method that learns the rewards of tuples in both relations by performing exploration only on one of the relations.
This method spends as much time as OSL does on exploration and learning.
In this method, $R$-scan uses OSL and $S$-scan follows a fixed random strategy.
%$R$-scan uses the results of the joins of tuples of $R$ and $S$ in its $N$-Failure explorations to learn rewards of tuples in $R$.
The main idea of this approach is for $S$-scan to collect and use information of joins performed during $N$-Failure explorations of $R$-scan to estimate the rewards of tuples in $S$.
%$S$ uses the results of trials in $N$-Failure explorations performed by $R$-scan to estimate 
$S$-scan uses this information to simulate $N$-Failure exploration for its tuples without doing any additional joins.

More precisely, $S$-scan collects the information on the results of joins of each tuple in $S$ with $N$ ($P$) randomly selected tuples of $R$ in which the joins fail (succeed) 
as required by $N$-Failure algorithm.  
As soon as $S$-scan has sufficient information to estimate the most rewarding tuple in $S$, it exploits this tuple by joining with with the entire $R$.
In the meantime, since $R$-scan follows OSL, it may collect enough information to exploit multiple tuples.
Hence, as opposed to the interleaving approach, $S$-scan may exploit its information after more than a single exploitation of $R$-scan.

\noindent{\bf Collecting Information from Random Samples of $R$.}
To simulate $N$-Failure for its tuples, $S$-scan needs information of joins between these tuples and randomly selected tuples from $R$.
Nonetheless, since $R$-scan uses $N$-Failure during its exploration, it may keep using the same tuple of $R$ for join in some consecutive join trials.
Thus, it may not provide random samples of $R$ tuples.
%When $S$-scan collects information to simulate $N$-Failure for tuples of $S$, it must use only the results for the join of tuples of $S$ with randomly selected tuples from $R$.
To estimate the reward of a tuple in $R$, $R$-scan initially joins the tuple $N$ times during its $N$-Failure exploration regardless of its reward.
$R$-scan also randomly selects each tuple in $R$ at the start if its $N$-Failure exploration.
Hence, for the tuples of $S$ that participate in these $N$ joins, we consider these joins to be joins with a randomly selected tuple from $R$ and use the information from these joins to estimate the rewards of these tuples in $S$.
$S$-scan does not use the information of results of joins during the exploitation of tuples in $R$ as the tuple chosen from $R$ for exploitation is not selected uniformly at random.

\noindent{\bf Estimating the Most Rewarding Tuple in $S$.}
In OSL, while $R$-scan performs $N$-Failure explorations, $S$-scan sequentially scans $S$ to provide $R$-scan with fresh randomly selected tuples. 
If $S$ contains many tuples, using this method, $S$-scan may not collect sufficient information on each tuple in $S$.
Thus, we limit the total number of tuples in $S$ used for $N$-Failure explorations of $R$-scan.
We set this value to $\sqrt{|R|}$ due to the analysis in Section~\ref{sec:singleLearning:analysis}.
When $S$-scan exhaust these tuples during $N$-Failure explorations of $R$-scan, it restarts its scan from the beginning of $S$.

\noindent{\bf Exploitation.}
As soon as $S$-scan gather sufficient information on to estimate rewards of $\sqrt{|R|}$ tuples in $S$ using $N$-Failure, it picks the most rewarding tuple in $S$ and joins it with 
the entire $R$. 
After this exploitation step, $R$-scan resumes its $N$-Failure explorations or a round of exploitation depending on its available information.

\noindent{\bf Subsequent Implicit Exploration in $S$.}
Since $S$-scan has reward estimation for $\sqrt{|R|}$-1 tuples (excluding the exploited tuple) of $S$, it needs to gather sufficient information on an additional tuple of $S$ to decide the most rewarding tuple for its next exploitation step.
Hence, $S$-scan sequentially reads and uses $L$ tuples that appear next to the first $\sqrt{|R|}$ tuples in $S$ in the subsequent $N$-Failure explorations of $R$-scan.
The value of $L$ must be sufficient large to enable $R$-scan to estimate the reward of its tuples during $N$-Failure explorations, i.e., $L > N$.
Our initial empirical studies suggest that a value of $L = 2N$ tuples in $S$ is sufficiently large to be used for $N$-Failure explorations of tuples in $R$.
As soon as $S$-scan gathers enough information to simulate $N$-Failure exploration on one of these $L$ tuples, it will have enough information to find the most rewarding tuple in $S$.
Hence, it performs another round of exploitation.

\begin{comment}
One approach to address this problem is to {\it interleave} learning and fixed random sampling strategies such that the set of explored tuples of a scan operator is also a random set of tuples from its relation.
That is, each scan may view each tuple as both one of its actions and also a random reward signal for the other scan in the join. 
Each explored action in OSL (ROSL) is in fact a random sample of available actions.
Thus, if every scan operator uses OSL (ROSL) algorithm, in each exploration, it sends to the join operator both one action of its available actions, i.e., tuples, and a random sample of tuples of its underlying relation.
%Hence, each scan can both learn the reward of its own actions and at the same time enable the other scan to measure the rewards of its current actions by observing a random sample from its environment. 
\end{comment}

%----------------------------------------------------------------------------------------------------------------------------------------
\vspace{-4pt}
\subsection{Joining More than Two Relations}
Our framework in Section~\ref{qp:framework} is based and generalizes the query operator framework in relational data systems.
Thus, our proposed algorithms can be used in an operator tree with other join methods.
For instance, to execute a left-deep operator join tree of three relations, one may use collaborative learning for the leaf join and nested loop for the root one.
One might also use OSL for only one scan of the leaf join and the scan of the root join.

Generally speaking, it is useful for an operator to use a learning strategy: 1) if the operator must choose a subset of a relatively large set of tuples for join, and 2) the exploitation steps in all super-rounds have sufficiently many possible joins whose total rewards justifies the time spent on exploring and learning an optimal strategy. 
A learning-based strategy might not be useful in other settings. 

For instance, consider a left-deep join tree with three base relations where both scans of the leaf join collaboratively learn a join strategy as explained in this section.
A database system might use a pipelining approach in executing this query and performs a join between a joint tuple of the leaf join $t$ and all tuples in the third relation as soon as $t$ produced.
Because in this example, there is only one possible strategy for join, there is not any need for a learning algorithm to find the most efficient strategy.

Query optimizers often decide which algorithms to use for operators in a query plan.
Thus, a query optimizer may use this information and assign proper algorithms to operators in the aforementioned example.
An interesting future work is to investigate the changes in the query optimizer to consider our query processing methods. %in designing query plans. 

\section{Empirical Study}
\label{qp:empirical}
%-----------------------------------------------------------------------------------------------------------------------------------------
\subsection{Experimental Setting}
\label{qp:empirical:setting}
\begin{table}[h]
\caption{Information on TPC-H Relations}
\vspace{-8pt}
\begin{tabular}{|c|c|c|}
\hline
\textbf{Relation} & \textbf{Primary Key} & \textbf{Size (\# Tuples)} \\
\hline\hline
customer & custkey & 7,500,000 \\ 
\hline
supplier & suppkey & 500,000 \\ 
\hline
part & partkey & 10,000,000 \\ 
\hline
partsupp & (partkey, suppkey) & 40,000,000 \\ 
\hline
orders & orderkey & 75,000,000 \\ 
\hline
lineitem & (linenumber, orderkey) & 300,000,002 \\ 
\hline
\end{tabular}
\label{tab:tpch-table}
\vspace{-5pt}
\end{table}
%----------------------------------------------------------------------------------------------------------------------------------
\noindent{\bf Methods.}
We evaluate our proposed methods and the comparable methods explained in Section~\ref{sec:intro} that do not require lengthy preprocessing to create (large) auxiliary data structures, e.g., index, and are not limited to certain types of joins, e.g., only equijoin.
%Thus, we do not compare our methods to the ones that need indexes or other precomputed data structures that are resource-intensive to prepare and maintain e.g., \cite{li2016wander}.
We compare our methods to {\it nested loop ripple join} \cite{Haas:1999:RJO:304182.304208}, {\it nested loop} (NL), and {\it block nested loop} (BNL).
The nested loop ripple join quickly runs out of memory in all settings before generating answers as explained in Section~\ref{sec:intro}.
Previous work has found similar results about ripple join \cite{DBLP:conf/sigmod/JermaineDAJP05}.
As progressive sort-based and hash-based methods support a limited type of joins, i.e., equi-joins, they are not comparable to ours.  
But, to get a better understanding of the performance of our methods, we compare them to {\it Sort-Merge-Shrink} (SMS) join \cite{DBLP:conf/sigmod/JermaineDAJP05}.
SMS is a sort-based approach to generate a subset of results of equi-joins progressively and quickly.
%Since sorting of base relations may take long, sort-based methods may not be able to deliver a sufficiently large subset of the result fast.
%To overcomes this long initial delay, SMS modifies the sort step of the sort-merge method to generate joint tuples while sorting the base relations.
Empirical studies indicate that SMS outperforms hash-based algorithms \cite{DBLP:conf/sigmod/JermaineDAJP05}.
{\it UCB-Scan} did not return any result for our queries after 15 minutes so we do not report its results.

%-------------------------------------------------------------------------------------------------------------------------------------
\noindent{\bf Data.}
We use TPC-H benchmark \cite{TPC} to generate the queries and databases for our experiments. 
TPC-H is a benchmark for decision support queries that usually contain many join operators and is widely used to evaluate join processing techniques. 
We generate a TPC-H database of size 50 GB.
%We did not use larger values as it takes a very long time, e.g., more than a day, for the BNL to process some queries over larger versions of TPC-H database.
Table~ \ref{tab:tpch-table} contains information of TPC-H relations used in our experiments. 
One of the parameters that impacts the join processing time is the distribution of the join attribute values; more specifically their skewness. 
It has been recognized that data skew is prevalent in databases and real world data distributions are often non-uniform \cite{elseidy2014scalable,DBLP:conf/sigmod/SalzaT89}. %\cite{tpchskew}.
%For example, the large bibliographic database used by the MELVYL\textsuperscript{\textregistered}, online union catalog of the University of California library systems, captures highly skewed nature of key-term distributions \cite{Lynch1988SelectivityEA}. 
For instance, different customers usually place significantly different number of orders and different orders may contain various number of line items.
We evaluate the query run-time over different versions of TPC-H database with different degrees of skewness in the join attributes. 
The Zipfian distribution (Zipf) has been used to evaluate the performance of query execution methods on skewed data distributions \cite{elseidy2014scalable,DBLP:conf/sigmod/SalzaT89,christodoulakis1983estimating}. 
%Zipf distribution has been frequently used in other domains, e.g. linguistics and Web search and mining, to model skewness in data \cite{10.1145/1718487.1718513, Zipf}.
Our experiments use Zipf distributions with $z$ values ranging between $[0, 0.5, 1, 1.5]$ wherein $z = 0$ will result in uniform distribution and for $z \geq 0.5$, the distribution becomes more skewed as the value of $z$ grows. 
%Figure \ref{fig:skew-zipf} exhibits the frequencies of the top 3 most occurring join attribute values ($orderkey$ in relation $lineitem$) with increasing skew parameter $z$. 
The data distribution $z = 0.5$ is nearly uniform and the results of our experiments using $z = 0.5$ are similar to the ones with $z = 0$, therefore, we do not report results of $z = 0.5$.
%When $z = 2$, the most frequent value occurs more than $50\%$ of the total number of tuples in the relation(number of tuples in $lineitem$ = $6*10^6$). The proportion of the second and third frequent tuples decrease sharply.

%We have used all TPC-H queries with binary joins.
%To increase the number of queries in our study, we have also extracted binary joins from TPC-H queries with multiple joins.
%As explained in Section~\ref{sec:collaboration}, collaborative learning of both scans (co-learning) may {\it not} be useful to execute 1-N joins in which each tuple of the right relation joins with at most one tuple of the left one. 
%Thus, we will use this method only for M-N joins, i.e., each tuple of the right (left) relation may join with multiple tuples from the left (right) one. 
\noindent{\bf Queries.} 
%We use all TPC-H queries with binary joins and extracted some binary joins from other queries in TPC-H workload.
%We have extracted the following 1-N and M-N joins from TPC-H benchmark. 
%We have categorized our queries as 1-N and M-N joins. 
We extract a set of 1-N join queries from TPC-H workload.
The original TPC-H query numbers and their corresponding (extracted) 1-N joins are:  %$Q_{14}:$ $part \bowtie_{partkey} lineitem$, 
$Q_{12}:$ $orders \bowtie_{orderkey} lineitem$, $Q_{10}:$ $customer \bowtie_{custkey} orders$, $Q_{15}:$ $supplier \bowtie_{suppkey} lineitem$.
We extract two M-N joins from TPC-H workload: 
$Q_{9}:$ $partsupp \bowtie_{partkey} lineitem$,  $Q_{11}:$ $orders \bowtie_{o\_orderdate=l\_shipdate} lineitem$. 
%We refer to these joins by their original query number in TPC-H workload. 
To evaluate the performance of our methods over joins with complex conditions, we modify the join predicates of these queries to resemble {\it similarity joins}.
Due to mistakes in data entry, data values often contain errors \cite{chu2016data}.
For instance, foreign keys in a relation might not precisely match the primary keys in another relation due to errors in values of foreign keys. 
We modify the join predicate of TPC-H queries and use {\it Levenshtein distance} instead of equality to implement approximate match between join attributes.
It allows for edit distance of at most one between join attributes.
%TODO: limits 
%We have changed the join conditions of all queries to test whether all but the first or last character of the compared values are equal using sub-string matching with LIKE operation.
We run each query 3 times by randomly shuffling orders of tuples and report the average. 
%As the order of tuples on disk may influence the results, we have randomly shuffled the order of tuples on disk for each relation for each run. 
The reported times include both {\it learning} and {\it execution}.
\noindent{\bf Platform.}
We have implemented our proposed and other methods inside PostgreSQL 11.5.
%PostgreSQL supports and implements a tuple-based nested loop algorithm for join, which is significantly less efficient than BNL \cite{DBBook}.
%Hence, we do not use the tuple-based nested loop algorithm of PostgreSQL in our experiments.
We have performed our empirical study on a Linux server with Intel(R) Xeon(R) 2.30GHz cores and 500GB of memory. 
The size of available memory for each query on PostgreSQL is set to the minimum possible value of $0.5$ GB to simulate available memory for a single query in a server. 
SMS did not return any result after 15 minutes except for $Q_{12}$. 
To get more insight to the relative performance of SMS and our proposed methods, we report the results of SMS using $4$ GB memory for other queries.
%The additional main memory does not improve the performance of other methods in returning subsets of results significantly. 
%The size of available cache and memory of the PostgreSQL server is set to the minimum possible value of $128KB$. 
%This allows the database to only cache a few blocks of the relations and mitigates the impact of caching on the reported results. 

%----------------------------------------------------------------------------------------------------------------------------------
\noindent{\bf Hyper-Parameters.}
%Our proposed methods have two hyper-parameters that should be trained.
%The value of $N$ in \textbf{\textit{N}-failure} exploration determines the termination of exploring phase of current block. 
We have used query $Q_{12}$ to train the value of $N$ and set $N=10$ for all experiments.
%Following the algorithm described in Section~\ref{sec:singleLearning}, we have evaluated a range of relatively small values for $N$, i.e., $N = [1, 10, 50, 100]$. 
%We have chosen it to be 10.
%With $N=10$, the algorithm establishes a balance between the performance and cost of learning process compared to other values of $N$ that were tried out. H
%We have used 10-failure exploration strategy in all reported experiments.
We set the value of $M$ to $(\sqrt{\lvert S \lvert})$ following the analysis in Section~\ref{sec:singleLearning:analysis}.
We train the size of each partition using $Q_{12}$ and set each partition to 320 tuples.

\subsection{Experimental Results}
%We first report our results on using online learning for a single scan and both scans and compare them to BNL for generating different samples of the results in Sections~\ref{sec:empirical:single} and \ref{sec:empirical:co}.
%Then, we will report our results on producing the complete join in Section~\ref{sec:empirical:fullJoin}.

%--------------------------------------------------------------------------------------------------------------------------
\subsubsection{Single Scan Learning} 
\label{sec:empirical:single}
We first analyze the performance of OSL in which only one of the scan operators learn.
Considering a 1-N join query, it is reasonable for the scan operator over the relation whose every tuple may join with multiple tuples of the other relation to adopt a learning strategy.
%, e.g., with the primary key as join attribute,
%This is because each tuple in the other relation joins with at most one tuple and there may not be a considerable difference in rewards of different blocks of this relation. Thus, we report the results of the cases where only the scan over the {\it 1} side of the 1-N join uses a learning strategy.
To compare our algorithm to NL (BNL), we use the relation with learning scan as the outer relation in NL (BNL).
To simplify our description, we refer to this relation as outer relation in all methods.
For each M-N join query, we randomly pick one of the relation to be the outer relation for OSL and NL (BNL).
%For an M-N join query, however, it is not clear which scans should use a learning strategy since the join attributes in both relations may contain duplicate values. We report the result of OSL over each relation for M-N joins using OSL and report the average of the two cases. Similarly, we run NL (BNL) with each relation as outer in separate runs and report their average for these queries.

Figure~\ref{fig:sl-k1000} illustrates the discounted weighted average time of OSL (solid curve) and NL (dashed curve) for equijoin queries over different output sample sizes ($K$) in log-log graphs. 
Since the results of BNL are similar to NL, we do not show them in Figure~\ref{fig:sl-k1000}.
Since the queries $Q_{9}$, $Q_{10}$, $Q_{11}$, $Q_{12}$, $Q_{14}$, $Q_{15}$ deal with relations of different sizes, their full joins generate different total number of tuples. 
Thus, we use a different maximum values for $K$ per query based on their total output. 
%To determine significant K value, we used $Q_{14}$ as baseline and set K value as $0.1\%$ of total size of output. In order to make the response time as consistent and comparable as possible, we set $K$ value based on the relative selectivity of join attribute for other queries. 
%We use $K$ up to $1000$ for $Q_{12}$, $5000$ for $Q_{9}$, $Q_{10}$, $Q_{14}$, and $50000$ for $Q_{11}$ and $Q_{15}$. 
%The graphs with the ranges $K = [10-500]$, $K = [10-3000]$, and $K = [10-50000]$ show the average results for queries $Q_{12}$, $Q_{14}$, $Q_{9}$, $Q_{10}$, $Q_{14}$, and  $Q_{11}$, $Q_{15}$, respectively. 
Each graph compares the performance of both the algorithms as the $z$ value increases from 0 to 1.5. 
Each color represents the results using a different $z$ value. 
As NL (BNL) has not produced any result after the maximum time limit of 15 minutes in some settings, its results are not shown in some graphs. 
OSL significantly outperforms NL (BNL) for all queries and all values for $z$.
Generally, the more skewed the data is, the more improvement OSL shows compared to NL (BNL). 

\begin{comment}
When $z = 0$, the data distribution is uniform, i.e., no skew in the data. 
As all the blocks of outer relation may have nearly similar rewards, i.e., they generate similar number of tuples,  
learning may not offer a considerable advantage over algorithms such as BNL.
Single scan learning, however, is able to generate sufficiently many tuples to compensate for the time it spends on exploration and learning.
When $z \geq 1$, single scan learning outperforms BNL significantly as it is able to spot the most rewarding blocks during join processing. 
Also, the more skewed the data is, the more improvement single scan learning shows compared to BNL. 
We observe this trend in all three subgraphs except the experiment with $K=5000$ and $z=2$. 
For this exceptional setting, single scan learning performs worse than BNL. 
This is mainly due to the performance of our method on the M-N join query $Q_{9}$. 
The join attribute $partkey$ is very skewed in both relations' $partkey$ attributes and at $z = 2$.
Due to the significant skewness of the join attribute distribution on both relations, a substantial majority of the results are generated from very few blocks in the outer relation and most other blocks may not generate any results. 
Our method might not be able to spot these few highly rewarding blocks by scanning only a subset of blocks in the outer relation.
For the same reason, it may also take a long time for BNL to run this query.
\end{comment}

Figure~\ref{fig:o_sms} shows that performance of SMS joins over equijoin queries using the same setting as the one in Figure~\ref{fig:sl-k1000}.
OSL considerably outperforms SMS for all queries and datasets.
As explained in Section~\ref{qp:empirical:setting}, SMS does not produce any result using 0.5GB of main memory for any query and dataset after 15 minutes. 
Hence, we use 4GB for SMS in the results reported in Figure~\ref{fig:o_sms}.
We use the default memory configuration of 0.5GB for OSL for its results in Figure~\ref{fig:o_sms}.
Although OSL is not built to support only joins with equality predicates, it outperforms SMS that is developed for them.

%Generally speaking, the more skewed the data is, the more significant the difference becomes.
%After the sort phase of SMS, it closes the gap and in some cases outperforms learning in long run.

\noindent{\bf Accuracy of Estimation.}
Figure~\ref{fig:mrun-q10} shows the average mean estimation error rates for the aggregation query COUNT on Q10, using ROSL and the estimation technique explained in the section~\ref{sec:estimate:estimator}. 
We use $z$ values $0$, $1$, and $1.5$. 
The number of join results for all values of $z$ is about 75,000,000.
The plot shows a relatively steep decline in error rates before 1000 output samples, which is about 0.001\% of the total results.  
For large values of $z>0$, the estimator returns about 90\% accuracy using only close to 1000 samples.
For $z=0$, the estimator requires a larger but still modest sample size of 3000 (about 0.004\% of the actual result) to deliver 90\% accuracy.

\begin{figure*}
  \centering
  \includegraphics[height=40mm]{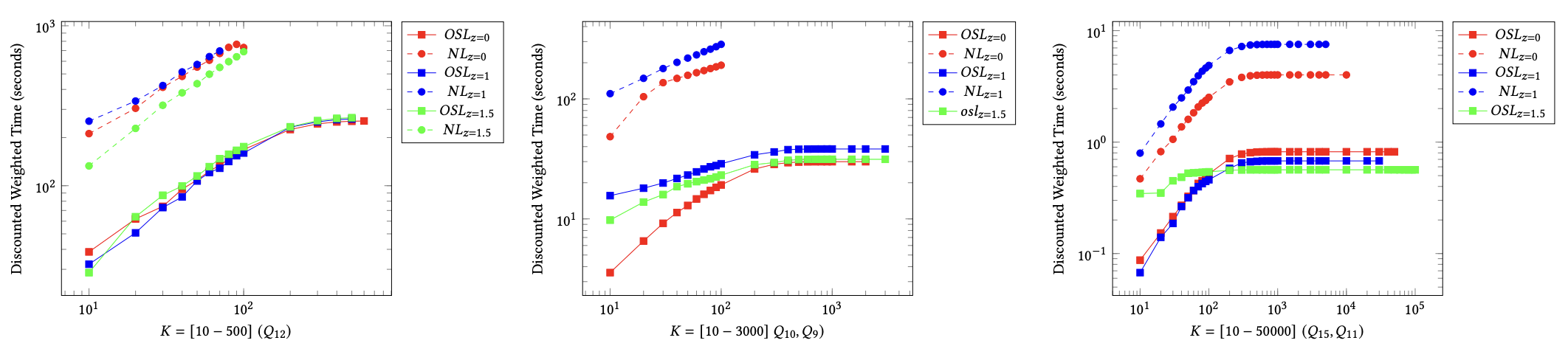}
  \caption{Average Response times for OSL and NL over comparable queries.}
   \label{fig:sl-k1000}
\end{figure*}

\begin{figure*}
  \centering
  \includegraphics[height=40mm]{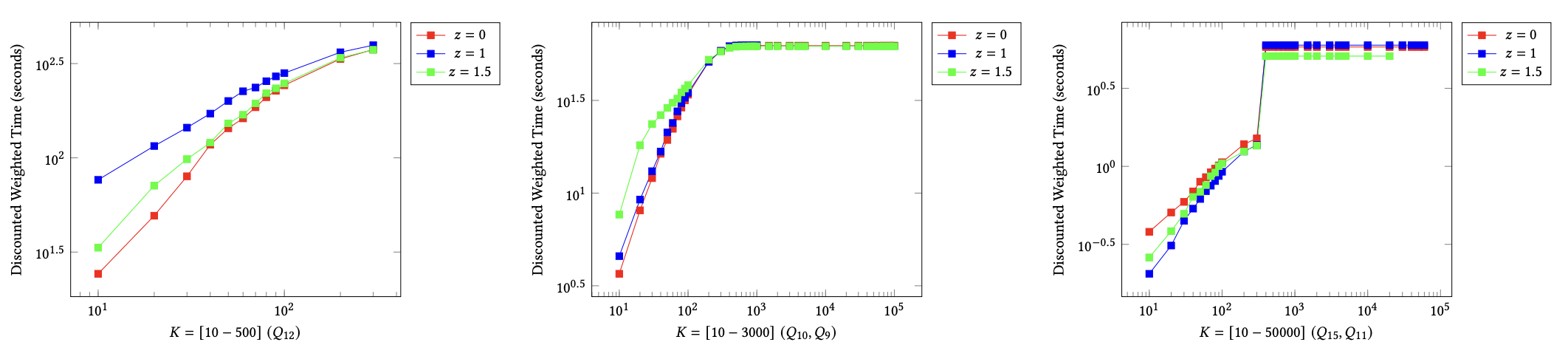}
  \caption{Discounted weighted average times for SMS over comparable queries.}
 \label{fig:o_sms}
\end{figure*}

%FIGURES for M-N EQUALITY JOIN QUERIES
\begin{figure*}
  \centering
  \includegraphics[height=40mm]{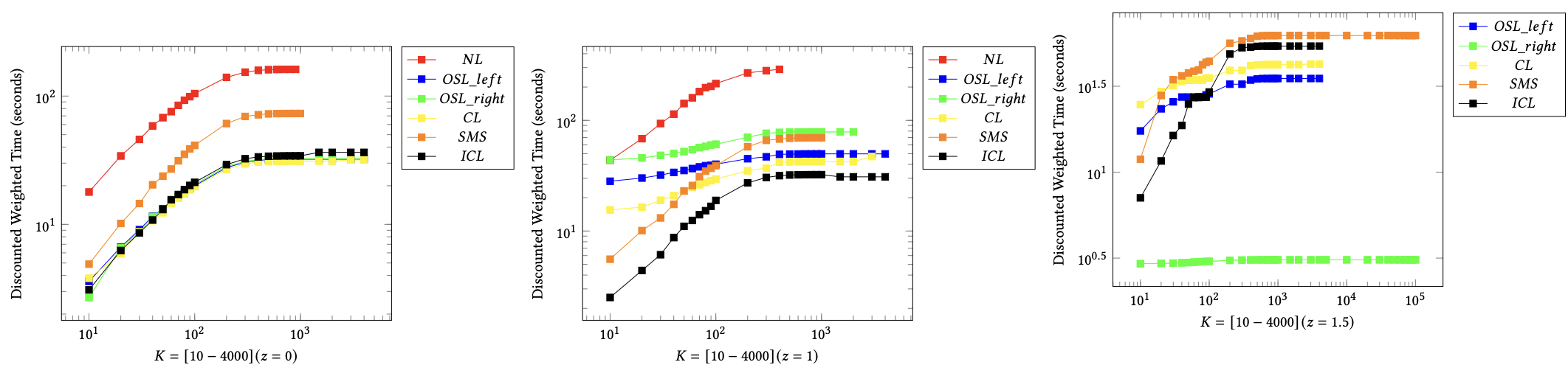}
  \caption{Discounted weighted average times for CL, ICL,  OSL\_right and OSL\_left, and NL over $Q_{9}$.}
 \label{fig:cl-k5000}
\end{figure*}

\begin{figure*}
  \centering
  \includegraphics[height=40mm]{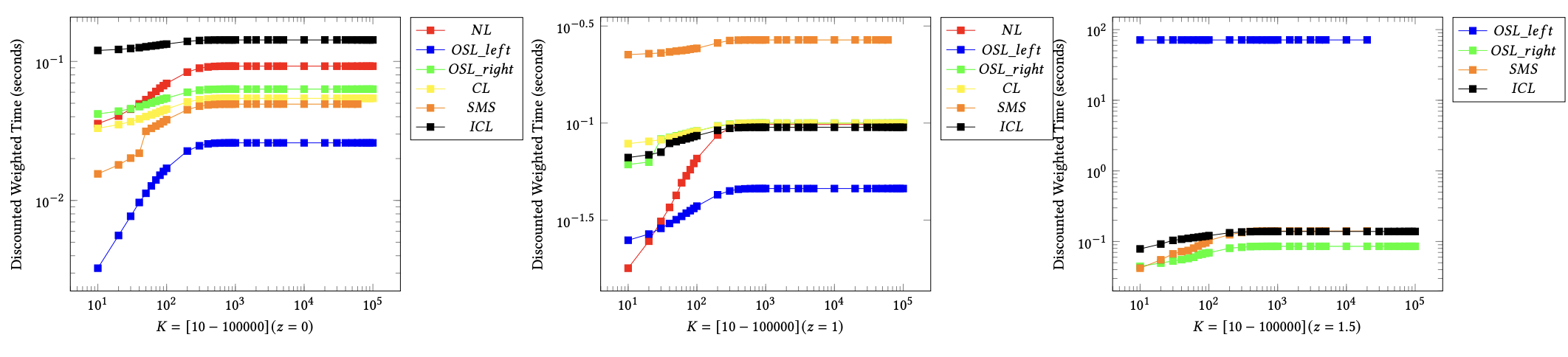}
  \caption{Discounted weighted average times for CL, ICL, OSL\_right and OSL\_left, SMS, and NL over $Q_{11}$.}
 \label{fig:cl-k50000}
\end{figure*}

%NON-EQUALITY JOIN QUERIES
\begin{figure*}
  \centering
  \includegraphics[height=40mm]{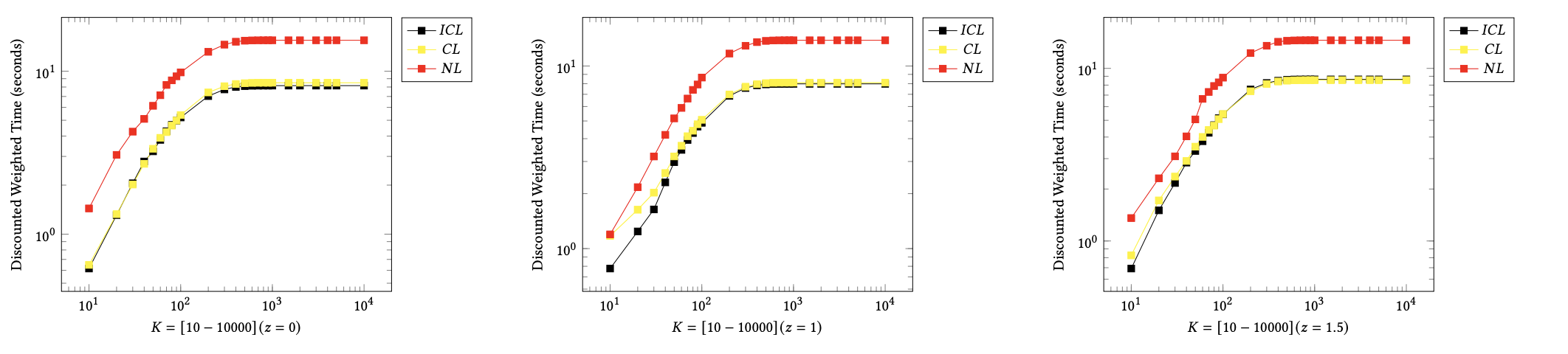}
  \caption{Discounted weighted average times for CL, ICL, and NL over $Q_{10}$ with non-equality predicate.} 
 \label{fig:like-cl-k5000}
\end{figure*}

\begin{figure*}
  \centering
  \includegraphics[height=40mm]{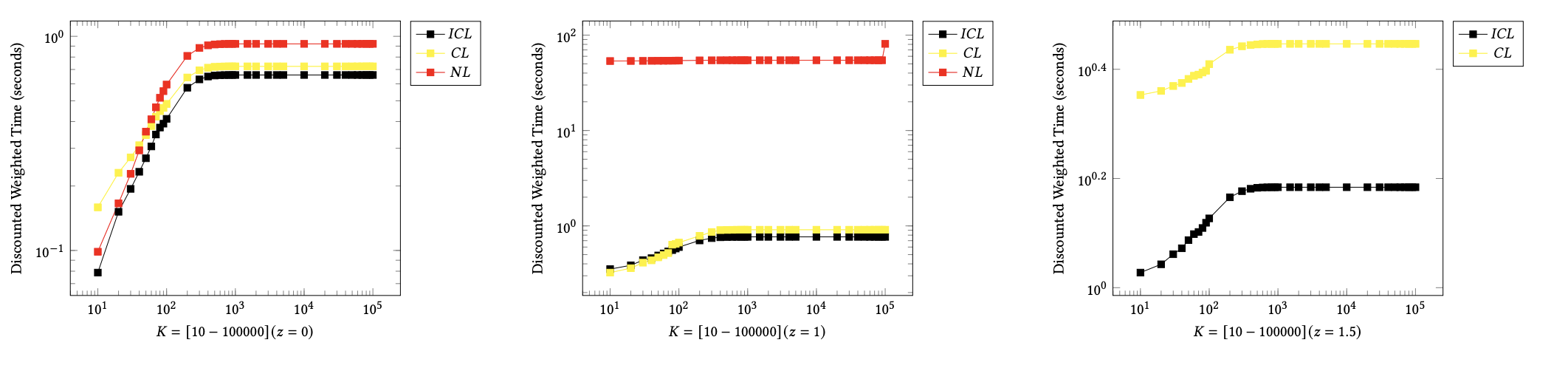}
  \caption{Discounted weighted average times for CL, ICL, and NL over $Q_{15}$ with non-equality predicate.} %, OSL\_right, OSL\_left 
 \label{fig:like-cl-k5000-Q12}
\end{figure*}

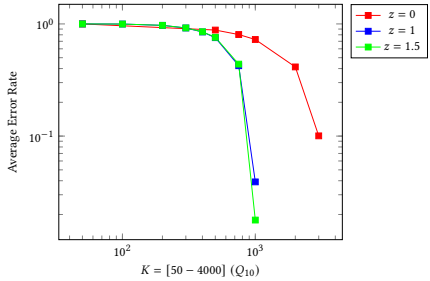
\begin{figure}
\centering
\begin{subfigure}{.33\textwidth}
        \begin{tikzpicture}[scale=0.55]
        \begin{loglogaxis}[
        	xlabel={$K = [50 - 4000]$ $(Q_{10})$},
        	ylabel={Average Error Rate},
        	legend pos=outer north east
        ]
        \addplot [mark=square*,mark options={scale=1,solid}, red]
        coordinates {
            %(50,	0.99999944	)
            %(500,	0.88174691	)
            %(1000,	0.727180372	)
            %(1500,	0.57088763	)
            %(2000,	0.414286141	)
            %(2500,	0.257619079	)
            %(3000,	0.101144146	)
            %(3500,	0.152232117	)
            %(4000,	0.201532557	)
            (50,	0.999999475	)
            (500,	0.881749101	)
            (750,   0.804223403 )
            (1000,	0.726440099	)
            %(1500,	0.570236187	)
            (2000,	0.413686219	)
            %(2500,	0.257019003	)
            (3000,	0.100484894	)
            %(3500,	0.100126323	)
            %(4000,	0.168545641	)
        };
        \addlegendentry{$z=0$}
        \addplot [mark=square*,mark options={scale=1,solid}, blue]
        coordinates {
            %(50,	0.999999385	)
            %(500,	0.882937263	)
            %(1000,	0.727158992	)
            %(1500,	0.571463697	)
            %(2000,	0.414652459	)
            %(2500,	0.258376568	)
            %(3000,	0.101729734	)
            %(3500,	0.098630941	)
            %(4000,	0.172576496	)
            (50,	0.999999697	)
            (100,	0.997492451	)
            (200,	0.970973954	)
            (300,	0.920164289	)
            (400,	0.848004376	)
            (500,	0.754297982	)
            (750,   0.424128865 )
            (1000,	0.039062822	)
            %(1500,	1.362398984	)
            %(2000,	3.210821051	)
            %(2500,	5.584812036	)
            %(3000,	8.485513883	)
            %(3500,	11.00236429	)
            %(4000,	12.89751621	) 
            
        };
        \addlegendentry{$z=1$}
        \addplot [mark=square*,mark options={scale=1,solid}, green]
        coordinates {
            %(50,	0.999999901	)
            %(500,	0.881523932	)
            %(1000,	0.726360702	)
            %(1500,	0.57027925	)
            %(2000,	0.413976018	)
            %(2500,	0.257215688	)
            %(3000,	0.10066005	)
            %(3500,	0.066904017	)
            %(4000,	0.03335044	) 
            (50,	0.999999088	)
            (100,	0.99733635	)
            (200,	0.971406374	)
            (300,	0.921298088	)
            (400,	0.851299829	)
            (500,	0.759818424	)
            (750,   0.438124806 )
            (1000,	0.017840885	)
            %(1500,	1.324632476	)
            %(2000,	3.160351979	)
            %(2500,	5.521150223	)
            %(3000,	8.407454155	)
            %(3500,	9.901907139	)
            %(4000,	11.00527902	) 
        };
        \addlegendentry{$z=1.5$}
        \end{loglogaxis}
        \end{tikzpicture}
 \end{subfigure}
\caption{Average error rates for query $Q_{10}$ using ROSL.}
\label{fig:mrun-q10}
\end{figure}
% bharghav-end

%-----------------------------------------------------------------------------------------------
\subsubsection{Collaborative Learning} 
\label{sec:empirical:co}

\noindent{\bf Equality Joins.}
We compare the discounted weighted average times of the collaborative learning (CL) proposed in Section~\ref{sec:collaboration:inter}, implicit collaborative learning (ICL) proposed in Section~\ref{sec:collaboration:implicit}, OSL, NL, BNL, and SMS for different output samples sizes (K) using two M-N equi-join queries $Q_{9}$ and $Q_{11}$. 
%Each reported response time of BNL is the average of the response times of two different sets of runs with each relation as the outer one.
%To compare the efficiency of the collaborative scan method with the one of single scan learning, we also report the response times of the single scan learning method over these queries.
We report two runs of OSL for each query in each of which one of the scans uses OSL algorithm (denoted as {\it OSL\_left} and {\it OSL\_right}).
Figures~\ref{fig:cl-k5000} and \ref{fig:cl-k50000} show the response times of CL, ICL, OSL\_left, OSL\_right, NL, and SMS over $Q_{9}$ and $Q_{11}$, respectively, using log-log graphs. 
Since results of BNL are similar to NL, we do not show them in these figures.
NL (BNL) do not produce any results over the dataset with $z=1.5$.

ICL in all settings but $Q_{11}$ and $z=0$ delivers the same performance or improves the performance of CL. 
In each setting, one of the relations may contain more tuples with high rewards that the other relation. 
Therefore, using OSL for one of the scan operators may be more efficient than using CL or ICL, e.g., OSL\_left for $Q_11$ and $z=0$ and $z=1$ or OSL\_right for $Q_11$ and $z=1.5$.
Nonetheless, it is {\it not} clear which one of scan operators should use a learning strategy to deliver the most efficient results before running the query.
As observed in Figures~\ref{fig:cl-k5000} and \ref{fig:cl-k50000} ICL and CL provide a middle-ground between these two possible configurations for using OSL.

Learning-based methods in all settings but ICL in $z=0$ and $Q_{11}$ significantly outperform NL (BNL).
The differences between the learning algorithms and BNL become generally more significant as the skewness parameter $z$ increases. 
%We again observe the same exception in $Q_{9}$ at $z = 2$ and as we discussed in the preceding section.
%It is very hard to learn rewarding blocks on this query as the rewarding blocks are very rare due to highly skewed join attributes on both relations.
%All three learning algorithms are at least as efficient as BNL or outperform it over $Q_{11}$.
%The performances of all learning algorithms on $Q_{11}$ are close and it is difficult to certainly state which one outperforms the rest. 
%As join attributes in this query, i.e.,  $orderdate$ and $shipdate$, are highly selective, it rather difficult to spot the top rewarding blocks.
%Nonetheless, all learning methods find sufficiently rewarding blocks.
%(each value may repeat too many times), no matter which learning algorithm is chosen, it's very likely to gain non-zero rewards for each block in learning process even at z = 2. Due to our algorithm only learn small subset of blocks, it's difficult to really spot the most promising block.
ICL also generally outperform or deliver the same performance as SMS in all settings but $Q_{11}$ and $z=0$.
The reported results for SMS use significantly larger memory (4GB) than other methods (0.5GB).
%Notable exceptions are $Q_{9}$ for $z=2$, due to the reasons explained before for this setting, and $Q_{9}$ for $z=0$, which is expected due to the uniformity of the distributions.

\noindent{\bf Non-Equality Joins.}
Figures~\ref{fig:like-cl-k5000} and \ref{fig:like-cl-k5000-Q12} reports the results CL, ICL, and NL over non-equi join query created from $Q11$ and $Q15$. respectively, using log-log graphs. 
As the results of BNL are similar to NL, we do not report them in these graphs.
We observe almost a similar trend to the ones of equi-join queries where learning-based methods substantially outperform NL (BNL).
This difference is more significant for values of $z > 0$ in the results of $Q_{15}$ as due to the skewness in the data learning-based methods may find more highly rewarding partitions.
NL has not produced any answers for $Q_{15}$ and $z=1.5$ after about 15 minutes.
Similar to the results of M-N equi-join queries, ICL delivers the same or better performance than CL.

\section{Related Work}
\label{qp:related}
{\bf Adaptive Query Processing.} 
Researchers have proposed adaptive query optimization and planning techniques to deal with the lack of information about the query workload \cite{elseidy2014scalable,kaftan2018cuttlefish}.
As opposed to our framework, this line of research is about finding the most efficient algorithm(s) from a fixed set of traditional query execution algorithms, e.g., traditional join algorithms, to run operators in a query plan.
For instance, researchers in \cite{elseidy2014scalable} use an especial operator in the query plan that chooses the best algorithm between using nested loop, sort-merge, or hash-join to execute its child join operator.
%These methods might also reorganize order of executing operators in the query plan.
%They also do not learn the query plan, therefore, their proposed plans may not effectively leverage the information gained during query execution. 
In our approach, however, each query operator aims at learning an efficient algorithm of accessing tuples of underlying relations and generating results.
Also, our methods deal with a significantly larger set of actions than the ones in adaptive query processing.

\noindent
{\bf Reinforcement Learning for Query Optimization.} Some query optimization methods aim at learning efficient query plans, e.g., order of joins, using reinforcement learning techniques \cite{trummer2019skinnerdb,ortiz2018learning,kaftan2018cuttlefish}. 
As opposed to our method, they do {\it not} learn query processing algorithms. 
They also face significantly fewer actions than ours.
%Their actions can be also accessed randomly.
%during or over the course several execution of the query

\noindent{\bf Offline Learning for Query Processing.}
%It aims at approximating the results of queries with the {\it count} aggregation function whose selection conditions can be evaluated efficiently for each single tuple, e.g., selection condition for a tuple depends only on the attributes of that tuple. It first uses a subset of tuples in the underlying relation to learn the probability that a tuple satisfies the input query's selection condition. It then samples unseen tuples according to their predicted probability to approximate the result of the query.
The authors in \cite{10.14778/3368289.336830} learn the distribution of the data to increase the sampling efficiency 
for queries with {\it count} aggregation function such that membership to the underlying relation can be checked efficiently. 
%learned g as a surrogate for stratified sampling of data items.
%restrictions are evaluated quickly for each tuple, i.e., selection over a single table. 
%requires fast access to all tuples of a single (unrealistic over large base relations) or joint relations as enumerating all tuples is efficient.
This approach supports a restricted class of queries. 
For example, it does {\it not} handle joins as checking the join condition for a tuple in one relation may
need a full scan of the other one, which requires a significant amount of time.
%TODO: They can compute a fixed probability of sampling for all tuples based on the learned model. Hence, it is possible to enumerate all members of the sample space, which is not the case for joined tuples.
It also heavily relies on learning an accurate predictive model for some attributes based on the content of others. 
Such dependencies may not be always present. 
It first learns a model for the query and then uses it to sample tuples.
This separation of exploration and exploitation is shown to learn significantly less effective models in substantially longer time than the online methods that combine exploration and exploitation \cite{slivkins2019introduction}.
Our approach handles a significantly larger class of queries and uses collaborative online learning to find efficient strategies. 

\noindent{\bf Offline Learning to Optimize In-Memory Join.}
Researchers have also investigated using offline learning methods to optimize index-based, sort-merge and hash-based algorithms for in-memory join \cite{10.14778/3587136.3587148}.
As explained in Section~\ref{sec:intro}, these methods handle limited types of joins.
As opposed to our method, these optimization techniques are limited to in-memory joins.
As they use offline learning, they may not return subsets of results quickly.

\noindent{\bf Learned Probabilistic Structures.}
Instead of conventional precomputed data structures, such as indexes, researchers have leveraged a precomputed probabilistic graphical model over each relation to compute samples of their join \cite{10.1145/3448016.3457302}.
It takes a time linear to the size of a relation to learn such model over the relation.
This approach will face the challenges explained in Section~\ref{sec:intro} for methods that require precomputed data structures, e.g., limited types of join. 

\section{Conclusion and Future work}
%\vspace{-2pt}
\label{qp:conclusion}
Many users would like to investigate (subsets of) join results progressively over large and evolving data quickly.
Current methods rely on costly preparation and maintaining auxiliary data structures, time-consuming reorganization of input relations, or apply only on a limited type of joins. 
We proposed a novel approach to return (subsets of) joins progressively and efficiently over large dataset without any preprocessing.
In our approach, the scan operators in a join learn the portions of the tuples in each relation that produce the most join results during query execution quickly.
Our empirical study indicates that our proposed methods outperform similar current methods in many settings significantly and deliver similar performance in the rest.
We believe that our approach introduces a fresh path in query processing by treating each query operator as an online learning agent and modeling query processing as an online learning and collaboration problem. It allows us to leverage the existing research on online and multi-agent learning to design query processing methods. 
An interesting future direction is to learn query execution and optimization strategies online together.

\bibliographystyle{ACM-Reference-Format}
\bibliography{main}

%%% -*-BibTeX-*-
%%% Do NOT edit. File created by BibTeX with style
%%% ACM-Reference-Format-Journals [18-Jan-2012].

\begin{thebibliography}{29}

%%% ====================================================================
%%% NOTE TO THE USER: you can override these defaults by providing
%%% customized versions of any of these macros before the \bibliography
%%% command.  Each of them MUST provide its own final punctuation,
%%% except for \shownote{}, \showDOI{}, and \showURL{}.  The latter two
%%% do not use final punctuation, in order to avoid confusing it with
%%% the Web address.
%%%
%%% To suppress output of a particular field, define its macro to expand
%%% to an empty string, or better, \unskip, like this:
%%%
%%% \newcommand{\showDOI}[1]{\unskip}   % LaTeX syntax
%%%
%%% \def \showDOI #1{\unskip}           % plain TeX syntax
%%%
%%% ====================================================================

\ifx \showCODEN    \undefined \def \showCODEN     #1{\unskip}     \fi
\ifx \showDOI      \undefined \def \showDOI       #1{#1}\fi
\ifx \showISBNx    \undefined \def \showISBNx     #1{\unskip}     \fi
\ifx \showISBNxiii \undefined \def \showISBNxiii  #1{\unskip}     \fi
\ifx \showISSN     \undefined \def \showISSN      #1{\unskip}     \fi
\ifx \showLCCN     \undefined \def \showLCCN      #1{\unskip}     \fi
\ifx \shownote     \undefined \def \shownote      #1{#1}          \fi
\ifx \showarticletitle \undefined \def \showarticletitle #1{#1}   \fi
\ifx \showURL      \undefined \def \showURL       {\relax}        \fi
% The following commands are used for tagged output and should be
% invisible to TeX
\providecommand\bibfield[2]{#2}
\providecommand\bibinfo[2]{#2}
\providecommand\natexlab[1]{#1}
\providecommand\showeprint[2][]{arXiv:#2}

\bibitem[\protect\citeauthoryear{Agarwal, Mozafari, Panda, Milner, Madden, and Stoica}{Agarwal et~al\mbox{.}}{2013}]%
        {Agarwal:2013:BQB:2465351.2465355}
\bibfield{author}{\bibinfo{person}{Sameer Agarwal}, \bibinfo{person}{Barzan Mozafari}, \bibinfo{person}{Aurojit Panda}, \bibinfo{person}{Henry Milner}, \bibinfo{person}{Samuel Madden}, {and} \bibinfo{person}{Ion Stoica}.} \bibinfo{year}{2013}\natexlab{}.
\newblock \showarticletitle{BlinkDB: queries with bounded errors and bounded response times on very large data}. In \bibinfo{booktitle}{\emph{Eighth Eurosys Conference 2013, EuroSys '13, April 14-17, 2013}}, \bibfield{editor}{\bibinfo{person}{Zdenek Hanz{\'{a}}lek}, \bibinfo{person}{Hermann H{\"{a}}rtig}, \bibinfo{person}{Miguel Castro}, {and} \bibinfo{person}{M.~Frans Kaashoek}} (Eds.). \bibinfo{publisher}{{ACM}}, \bibinfo{address}{Prague, Czech Republic}, \bibinfo{pages}{29--42}.
\newblock
\urldef\tempurl%
\url{https://doi.org/10.1145/2465351.2465355}
\showDOI{\tempurl}


\bibitem[\protect\citeauthoryear{Aken, Yang, Brillard, Fiorino, Zhang, Billian, and Pavlo}{Aken et~al\mbox{.}}{2021}]%
        {vanaken21}
\bibfield{author}{\bibinfo{person}{Dana~Van Aken}, \bibinfo{person}{Dongsheng Yang}, \bibinfo{person}{Sebastien Brillard}, \bibinfo{person}{Ari Fiorino}, \bibinfo{person}{Bohan Zhang}, \bibinfo{person}{Christian Billian}, {and} \bibinfo{person}{Andrew Pavlo}.} \bibinfo{year}{2021}\natexlab{}.
\newblock \showarticletitle{An Inquiry into Machine Learning-based Automatic Configuration Tuning Services on Real-World Database Management Systems}.
\newblock \bibinfo{journal}{\emph{Proc. {VLDB} Endow.}} \bibinfo{volume}{14}, \bibinfo{number}{7} (\bibinfo{year}{2021}), \bibinfo{pages}{1241--1253}.
\newblock


\bibitem[\protect\citeauthoryear{Berry, Chen, Zame, Heath, and Shepp}{Berry et~al\mbox{.}}{1997}]%
        {Berry:1997}
\bibfield{author}{\bibinfo{person}{Donald~A. Berry}, \bibinfo{person}{Robert~W. Chen}, \bibinfo{person}{Alan Zame}, \bibinfo{person}{David~C. Heath}, {and} \bibinfo{person}{Larry~A. Shepp}.} \bibinfo{year}{1997}\natexlab{}.
\newblock \showarticletitle{{Bandit problems with infinitely many arms}}.
\newblock \bibinfo{journal}{\emph{The Annals of Statistics}} \bibinfo{volume}{25}, \bibinfo{number}{5} (\bibinfo{year}{1997}), \bibinfo{pages}{2103 -- 2116}.
\newblock
\urldef\tempurl%
\url{https://doi.org/10.1214/aos/1069362389}
\showDOI{\tempurl}


\bibitem[\protect\citeauthoryear{Carey and Kossmann}{Carey and Kossmann}{1997}]%
        {Carey:1997:SLA:253260.253302}
\bibfield{author}{\bibinfo{person}{Michael~J. Carey} {and} \bibinfo{person}{Donald Kossmann}.} \bibinfo{year}{1997}\natexlab{}.
\newblock \showarticletitle{On Saying "Enough Already!" in {SQL}}. In \bibinfo{booktitle}{\emph{{SIGMOD} 1997, Proceedings {ACM} {SIGMOD} International Conference on Management of Data, May 13-15, 1997}}, \bibfield{editor}{\bibinfo{person}{Joan Peckham}} (Ed.). \bibinfo{publisher}{{ACM} Press}, \bibinfo{address}{Tucson, Arizona, {USA}}, \bibinfo{pages}{219--230}.
\newblock
\urldef\tempurl%
\url{https://doi.org/10.1145/253260.253302}
\showDOI{\tempurl}


\bibitem[\protect\citeauthoryear{Carey and Kossmann}{Carey and Kossmann}{1998}]%
        {DBLP:conf/vldb/CareyK98}
\bibfield{author}{\bibinfo{person}{Michael~J. Carey} {and} \bibinfo{person}{Donald Kossmann}.} \bibinfo{year}{1998}\natexlab{}.
\newblock \showarticletitle{Reducing the Braking Distance of an {SQL} Query Engine}. In \bibinfo{booktitle}{\emph{VLDB'98, Proceedings of 24rd International Conference on Very Large Data Bases, August 24-27, 1998}}, \bibfield{editor}{\bibinfo{person}{Ashish Gupta}, \bibinfo{person}{Oded Shmueli}, {and} \bibinfo{person}{Jennifer Widom}} (Eds.). \bibinfo{publisher}{Morgan Kaufmann}, \bibinfo{address}{New York City, New York, {USA}}, \bibinfo{pages}{158--169}.
\newblock


\bibitem[\protect\citeauthoryear{Chaudhuri, Ding, and Kandula}{Chaudhuri et~al\mbox{.}}{2017}]%
        {DBLP:conf/sigmod/ChaudhuriDK17}
\bibfield{author}{\bibinfo{person}{Surajit Chaudhuri}, \bibinfo{person}{Bolin Ding}, {and} \bibinfo{person}{Srikanth Kandula}.} \bibinfo{year}{2017}\natexlab{}.
\newblock \showarticletitle{Approximate Query Processing: No Silver Bullet}. In \bibinfo{booktitle}{\emph{Proceedings of the 2017 {ACM} International Conference on Management of Data, {SIGMOD} Conference 2017, May 14-19, 2017}}, \bibfield{editor}{\bibinfo{person}{Semih Salihoglu}, \bibinfo{person}{Wenchao Zhou}, \bibinfo{person}{Rada Chirkova}, \bibinfo{person}{Jun Yang}, {and} \bibinfo{person}{Dan Suciu}} (Eds.). \bibinfo{publisher}{{ACM}}, \bibinfo{address}{Chicago, IL, USA}, \bibinfo{pages}{511--519}.
\newblock
\urldef\tempurl%
\url{https://doi.org/10.1145/3035918.3056097}
\showDOI{\tempurl}


\bibitem[\protect\citeauthoryear{Chaudhuri and Narasayya}{Chaudhuri and Narasayya}{2007}]%
        {chaudhuri2007self}
\bibfield{author}{\bibinfo{person}{Surajit Chaudhuri} {and} \bibinfo{person}{Vivek~R. Narasayya}.} \bibinfo{year}{2007}\natexlab{}.
\newblock \showarticletitle{Self-Tuning Database Systems: {A} Decade of Progress}. In \bibinfo{booktitle}{\emph{Proceedings of the 33rd International Conference on Very Large Data Bases, September 23-27, 2007}}, \bibfield{editor}{\bibinfo{person}{Christoph Koch}, \bibinfo{person}{Johannes Gehrke}, \bibinfo{person}{Minos~N. Garofalakis}, \bibinfo{person}{Divesh Srivastava}, \bibinfo{person}{Karl Aberer}, \bibinfo{person}{Anand Deshpande}, \bibinfo{person}{Daniela Florescu}, \bibinfo{person}{Chee~Yong Chan}, \bibinfo{person}{Venkatesh Ganti}, \bibinfo{person}{Carl{-}Christian Kanne}, \bibinfo{person}{Wolfgang Klas}, {and} \bibinfo{person}{Erich~J. Neuhold}} (Eds.). \bibinfo{publisher}{{ACM}}, \bibinfo{address}{University of Vienna, Austria}, \bibinfo{pages}{3--14}.
\newblock


\bibitem[\protect\citeauthoryear{Christodoulakis}{Christodoulakis}{1983}]%
        {christodoulakis1983estimating}
\bibfield{author}{\bibinfo{person}{Stavros Christodoulakis}.} \bibinfo{year}{1983}\natexlab{}.
\newblock \showarticletitle{Estimating record selectivities}.
\newblock \bibinfo{journal}{\emph{Information Systems}} \bibinfo{volume}{8}, \bibinfo{number}{2} (\bibinfo{year}{1983}), \bibinfo{pages}{105--115}.
\newblock
\showISSN{0306-4379}
\urldef\tempurl%
\url{https://doi.org/10.1016/0306-4379(83)90035-2}
\showDOI{\tempurl}


\bibitem[\protect\citeauthoryear{Chu, Ilyas, Krishnan, and Wang}{Chu et~al\mbox{.}}{2016}]%
        {chu2016data}
\bibfield{author}{\bibinfo{person}{Xu Chu}, \bibinfo{person}{Ihab~F Ilyas}, \bibinfo{person}{Sanjay Krishnan}, {and} \bibinfo{person}{Jiannan Wang}.} \bibinfo{year}{2016}\natexlab{}.
\newblock \showarticletitle{Data cleaning: Overview and emerging challenges}. In \bibinfo{booktitle}{\emph{Proceedings of the 2016 international conference on management of data}}. \bibinfo{pages}{2201--2206}.
\newblock


\bibitem[\protect\citeauthoryear{Elseidy, Elguindy, Vitorovic, and Koch}{Elseidy et~al\mbox{.}}{2014}]%
        {elseidy2014scalable}
\bibfield{author}{\bibinfo{person}{Mohammed Elseidy}, \bibinfo{person}{Abdallah Elguindy}, \bibinfo{person}{Aleksandar Vitorovic}, {and} \bibinfo{person}{Christoph Koch}.} \bibinfo{year}{2014}\natexlab{}.
\newblock \showarticletitle{Scalable and Adaptive Online Joins}.
\newblock \bibinfo{journal}{\emph{Proc. {VLDB} Endow.}} \bibinfo{volume}{7}, \bibinfo{number}{6} (\bibinfo{year}{2014}), \bibinfo{pages}{441--452}.
\newblock
\urldef\tempurl%
\url{https://doi.org/10.14778/2732279.2732281}
\showDOI{\tempurl}


\bibitem[\protect\citeauthoryear{Garcia{-}Molina, Ullman, and Widom}{Garcia{-}Molina et~al\mbox{.}}{2009}]%
        {DBBook}
\bibfield{author}{\bibinfo{person}{Hector Garcia{-}Molina}, \bibinfo{person}{Jeffrey~D. Ullman}, {and} \bibinfo{person}{Jennifer Widom}.} \bibinfo{year}{2009}\natexlab{}.
\newblock \bibinfo{booktitle}{\emph{Database systems - the complete book {(2.} ed.)}}.
\newblock \bibinfo{publisher}{Pearson Education}, \bibinfo{address}{New Jersey, USA}.
\newblock
\showISBNx{978-0-13-187325-4}


\bibitem[\protect\citeauthoryear{Haas and Hellerstein}{Haas and Hellerstein}{1999}]%
        {Haas:1999:RJO:304182.304208}
\bibfield{author}{\bibinfo{person}{Peter~J. Haas} {and} \bibinfo{person}{Joseph~M. Hellerstein}.} \bibinfo{year}{1999}\natexlab{}.
\newblock \showarticletitle{Ripple Joins for Online Aggregation}. In \bibinfo{booktitle}{\emph{{SIGMOD} 1999, Proceedings {ACM} {SIGMOD} International Conference on Management of Data, June 1-3, 1999}}, \bibfield{editor}{\bibinfo{person}{Alex Delis}, \bibinfo{person}{Christos Faloutsos}, {and} \bibinfo{person}{Shahram Ghandeharizadeh}} (Eds.). \bibinfo{publisher}{{ACM} Press}, \bibinfo{address}{Philadelphia, Pennsylvania, {USA}}, \bibinfo{pages}{287--298}.
\newblock
\urldef\tempurl%
\url{https://doi.org/10.1145/304182.304208}
\showDOI{\tempurl}


\bibitem[\protect\citeauthoryear{Hadad, Hirshberg, Zhan, Wager, and Athey}{Hadad et~al\mbox{.}}{2021}]%
        {doi:10.1073/pnas.2014602118}
\bibfield{author}{\bibinfo{person}{Vitor Hadad}, \bibinfo{person}{David~A. Hirshberg}, \bibinfo{person}{Ruohan Zhan}, \bibinfo{person}{Stefan Wager}, {and} \bibinfo{person}{Susan Athey}.} \bibinfo{year}{2021}\natexlab{}.
\newblock \showarticletitle{Confidence intervals for policy evaluation in adaptive experiments}.
\newblock \bibinfo{journal}{\emph{Proceedings of the National Academy of Sciences}} \bibinfo{volume}{118}, \bibinfo{number}{15} (\bibinfo{year}{2021}), \bibinfo{pages}{e2014602118}.
\newblock
\urldef\tempurl%
\url{https://doi.org/10.1073/pnas.2014602118}
\showDOI{\tempurl}
\showeprint{https://www.pnas.org/doi/pdf/10.1073/pnas.2014602118}


\bibitem[\protect\citeauthoryear{Hellerstein, Haas, and Wang}{Hellerstein et~al\mbox{.}}{1997}]%
        {Hellerstein:1997:OA:253260.253291}
\bibfield{author}{\bibinfo{person}{Joseph~M. Hellerstein}, \bibinfo{person}{Peter~J. Haas}, {and} \bibinfo{person}{Helen~J. Wang}.} \bibinfo{year}{1997}\natexlab{}.
\newblock \showarticletitle{Online Aggregation}. In \bibinfo{booktitle}{\emph{{SIGMOD} 1997, Proceedings {ACM} {SIGMOD} International Conference on Management of Data, May 13-15, 1997}}, \bibfield{editor}{\bibinfo{person}{Joan Peckham}} (Ed.). \bibinfo{publisher}{{ACM} Press}, \bibinfo{address}{Tucson, Arizona, {USA}}, \bibinfo{pages}{171--182}.
\newblock
\urldef\tempurl%
\url{https://doi.org/10.1145/253260.253291}
\showDOI{\tempurl}


\bibitem[\protect\citeauthoryear{Horvitz and Thompson}{Horvitz and Thompson}{1952}]%
        {Horvitz1952AGO}
\bibfield{author}{\bibinfo{person}{Daniel~G. Horvitz} {and} \bibinfo{person}{D.~J. Thompson}.} \bibinfo{year}{1952}\natexlab{}.
\newblock \showarticletitle{A Generalization of Sampling Without Replacement from a Finite Universe}.
\newblock \bibinfo{journal}{\emph{J. Amer. Statist. Assoc.}}  \bibinfo{volume}{47} (\bibinfo{year}{1952}), \bibinfo{pages}{663--685}.
\newblock
\urldef\tempurl%
\url{https://api.semanticscholar.org/CorpusID:120274071}
\showURL{%
\tempurl}


\bibitem[\protect\citeauthoryear{Jermaine, Dobra, Arumugam, Joshi, and Pol}{Jermaine et~al\mbox{.}}{2005}]%
        {DBLP:conf/sigmod/JermaineDAJP05}
\bibfield{author}{\bibinfo{person}{Chris Jermaine}, \bibinfo{person}{Alin Dobra}, \bibinfo{person}{Subramanian Arumugam}, \bibinfo{person}{Shantanu Joshi}, {and} \bibinfo{person}{Abhijit Pol}.} \bibinfo{year}{2005}\natexlab{}.
\newblock \showarticletitle{A Disk-Based Join With Probabilistic Guarantees}. In \bibinfo{booktitle}{\emph{Proceedings of the {ACM} {SIGMOD} International Conference on Management of Data, June 14-16, 2005}}, \bibfield{editor}{\bibinfo{person}{Fatma {\"{O}}zcan}} (Ed.). \bibinfo{publisher}{{ACM}}, \bibinfo{address}{Baltimore, Maryland, USA}, \bibinfo{pages}{563--574}.
\newblock
\urldef\tempurl%
\url{https://doi.org/10.1145/1066157.1066222}
\showDOI{\tempurl}


\bibitem[\protect\citeauthoryear{Kaftan, Balazinska, Cheung, and Gehrke}{Kaftan et~al\mbox{.}}{2018}]%
        {kaftan2018cuttlefish}
\bibfield{author}{\bibinfo{person}{Tomer Kaftan}, \bibinfo{person}{Magdalena Balazinska}, \bibinfo{person}{Alvin Cheung}, {and} \bibinfo{person}{Johannes Gehrke}.} \bibinfo{year}{2018}\natexlab{}.
\newblock \showarticletitle{Cuttlefish: {A} Lightweight Primitive for Adaptive Query Processing}.
\newblock \bibinfo{journal}{\emph{CoRR}}  \bibinfo{volume}{abs/1802.09180} (\bibinfo{year}{2018}).
\newblock
\showeprint[arXiv]{1802.09180}


\bibitem[\protect\citeauthoryear{Lattimore and Szepesv{\'a}ri}{Lattimore and Szepesv{\'a}ri}{2020}]%
        {lattimore2020bandit}
\bibfield{author}{\bibinfo{person}{Tor Lattimore} {and} \bibinfo{person}{Csaba Szepesv{\'a}ri}.} \bibinfo{year}{2020}\natexlab{}.
\newblock \bibinfo{booktitle}{\emph{Bandit algorithms}}.
\newblock \bibinfo{publisher}{Cambridge University Press}.
\newblock


\bibitem[\protect\citeauthoryear{Li, Wu, Yi, and Zhao}{Li et~al\mbox{.}}{2016}]%
        {li2016wander}
\bibfield{author}{\bibinfo{person}{Feifei Li}, \bibinfo{person}{Bin Wu}, \bibinfo{person}{Ke Yi}, {and} \bibinfo{person}{Zhuoyue Zhao}.} \bibinfo{year}{2016}\natexlab{}.
\newblock \showarticletitle{Wander Join: Online Aggregation via Random Walks}. In \bibinfo{booktitle}{\emph{Proceedings of the 2016 International Conference on Management of Data, {SIGMOD} Conference 2016, June 26 - July 01, 2016}}, \bibfield{editor}{\bibinfo{person}{Fatma {\"{O}}zcan}, \bibinfo{person}{Georgia Koutrika}, {and} \bibinfo{person}{Sam Madden}} (Eds.). \bibinfo{publisher}{{ACM}}, \bibinfo{address}{San Francisco, CA, USA}, \bibinfo{pages}{615--629}.
\newblock
\urldef\tempurl%
\url{https://doi.org/10.1145/2882903.2915235}
\showDOI{\tempurl}


\bibitem[\protect\citeauthoryear{Luo, Ellmann, Haas, and Naughton}{Luo et~al\mbox{.}}{2002}]%
        {10.1145/564691.564721}
\bibfield{author}{\bibinfo{person}{Gang Luo}, \bibinfo{person}{Curt~J. Ellmann}, \bibinfo{person}{Peter~J. Haas}, {and} \bibinfo{person}{Jeffrey~F. Naughton}.} \bibinfo{year}{2002}\natexlab{}.
\newblock \showarticletitle{A Scalable Hash Ripple Join Algorithm}. In \bibinfo{booktitle}{\emph{Proceedings of the 2002 ACM SIGMOD International Conference on Management of Data}} (Madison, Wisconsin) \emph{(\bibinfo{series}{SIGMOD '02})}. \bibinfo{publisher}{Association for Computing Machinery}, \bibinfo{address}{New York, NY, USA}, \bibinfo{pages}{252–262}.
\newblock
\showISBNx{1581134975}
\urldef\tempurl%
\url{https://doi.org/10.1145/564691.564721}
\showDOI{\tempurl}


\bibitem[\protect\citeauthoryear{Ortiz, Balazinska, Gehrke, and Keerthi}{Ortiz et~al\mbox{.}}{2018}]%
        {ortiz2018learning}
\bibfield{author}{\bibinfo{person}{Jennifer Ortiz}, \bibinfo{person}{Magdalena Balazinska}, \bibinfo{person}{Johannes Gehrke}, {and} \bibinfo{person}{S.~Sathiya Keerthi}.} \bibinfo{year}{2018}\natexlab{}.
\newblock \showarticletitle{Learning State Representations for Query Optimization with Deep Reinforcement Learning}.
\newblock \bibinfo{journal}{\emph{CoRR}}  \bibinfo{volume}{abs/1803.08604} (\bibinfo{year}{2018}).
\newblock
\showeprint[arXiv]{1803.08604}


\bibitem[\protect\citeauthoryear{Sabek and Kraska}{Sabek and Kraska}{2023}]%
        {10.14778/3587136.3587148}
\bibfield{author}{\bibinfo{person}{Ibrahim Sabek} {and} \bibinfo{person}{Tim Kraska}.} \bibinfo{year}{2023}\natexlab{}.
\newblock \showarticletitle{The Case for Learned In-Memory Joins}.
\newblock \bibinfo{journal}{\emph{Proc. VLDB Endow.}} \bibinfo{volume}{16}, \bibinfo{number}{7} (\bibinfo{date}{may} \bibinfo{year}{2023}), \bibinfo{pages}{1749–1762}.
\newblock
\showISSN{2150-8097}
\urldef\tempurl%
\url{https://doi.org/10.14778/3587136.3587148}
\showDOI{\tempurl}


\bibitem[\protect\citeauthoryear{Salza and Terranova}{Salza and Terranova}{1989}]%
        {DBLP:conf/sigmod/SalzaT89}
\bibfield{author}{\bibinfo{person}{Silvio Salza} {and} \bibinfo{person}{Mario Terranova}.} \bibinfo{year}{1989}\natexlab{}.
\newblock \showarticletitle{Evaluating the Size of Queries on Relational Databases with non Uniform Distribution and Stochastic Dependence}. In \bibinfo{booktitle}{\emph{Proceedings of the 1989 {ACM} {SIGMOD} International Conference on Management of Data, May 31 - June 2, 1989}}, \bibfield{editor}{\bibinfo{person}{James Clifford}, \bibinfo{person}{Bruce~G. Lindsay}, {and} \bibinfo{person}{David Maier}} (Eds.). \bibinfo{publisher}{{ACM} Press}, \bibinfo{address}{Portland, Oregon, USA}, \bibinfo{pages}{8--14}.
\newblock
\urldef\tempurl%
\url{https://doi.org/10.1145/67544.66927}
\showDOI{\tempurl}


\bibitem[\protect\citeauthoryear{Shanghooshabad, Kurmanji, Ma, Shekelyan, Almasi, and Triantafillou}{Shanghooshabad et~al\mbox{.}}{2021}]%
        {10.1145/3448016.3457302}
\bibfield{author}{\bibinfo{person}{Ali~Mohammadi Shanghooshabad}, \bibinfo{person}{Meghdad Kurmanji}, \bibinfo{person}{Qingzhi Ma}, \bibinfo{person}{Michael Shekelyan}, \bibinfo{person}{Mehrdad Almasi}, {and} \bibinfo{person}{Peter Triantafillou}.} \bibinfo{year}{2021}\natexlab{}.
\newblock \showarticletitle{PGMJoins: Random Join Sampling with Graphical Models}. In \bibinfo{booktitle}{\emph{Proceedings of the 2021 International Conference on Management of Data}} (Virtual Event, China) \emph{(\bibinfo{series}{SIGMOD '21})}. \bibinfo{publisher}{Association for Computing Machinery}, \bibinfo{address}{New York, NY, USA}, \bibinfo{pages}{1610–1622}.
\newblock
\showISBNx{9781450383431}
\urldef\tempurl%
\url{https://doi.org/10.1145/3448016.3457302}
\showDOI{\tempurl}


\bibitem[\protect\citeauthoryear{Slivkins}{Slivkins}{2019}]%
        {slivkins2019introduction}
\bibfield{author}{\bibinfo{person}{Aleksandrs Slivkins}.} \bibinfo{year}{2019}\natexlab{}.
\newblock \showarticletitle{Introduction to Multi-Armed Bandits}.
\newblock \bibinfo{journal}{\emph{Found. Trends Mach. Learn.}} \bibinfo{volume}{12}, \bibinfo{number}{1-2} (\bibinfo{year}{2019}), \bibinfo{pages}{1--286}.
\newblock
\urldef\tempurl%
\url{https://doi.org/10.1561/2200000068}
\showDOI{\tempurl}


\bibitem[\protect\citeauthoryear{TPC}{TPC}{}]%
        {TPC}
\bibfield{author}{\bibinfo{person}{TPC}.} \bibinfo{year}{}\natexlab{}.
\newblock \bibinfo{title}{{TPC Benchmark}}.
\newblock \bibinfo{howpublished}{\url{www.tpc.org}}.
\newblock


\bibitem[\protect\citeauthoryear{Trummer, Wang, Maram, Moseley, Jo, and Antonakakis}{Trummer et~al\mbox{.}}{2019}]%
        {trummer2019skinnerdb}
\bibfield{author}{\bibinfo{person}{Immanuel Trummer}, \bibinfo{person}{Junxiong Wang}, \bibinfo{person}{Deepak Maram}, \bibinfo{person}{Samuel Moseley}, \bibinfo{person}{Saehan Jo}, {and} \bibinfo{person}{Joseph Antonakakis}.} \bibinfo{year}{2019}\natexlab{}.
\newblock \showarticletitle{SkinnerDB: Regret-Bounded Query Evaluation via Reinforcement Learning}. In \bibinfo{booktitle}{\emph{Proceedings of the 2019 International Conference on Management of Data, {SIGMOD} Conference 2019, June 30 - July 5, 2019}}, \bibfield{editor}{\bibinfo{person}{Peter~A. Boncz}, \bibinfo{person}{Stefan Manegold}, \bibinfo{person}{Anastasia Ailamaki}, \bibinfo{person}{Amol Deshpande}, {and} \bibinfo{person}{Tim Kraska}} (Eds.). \bibinfo{publisher}{{ACM}}, \bibinfo{address}{Amsterdam, The Netherlands}, \bibinfo{pages}{1153--1170}.
\newblock
\urldef\tempurl%
\url{https://doi.org/10.1145/3299869.3300088}
\showDOI{\tempurl}


\bibitem[\protect\citeauthoryear{Urhan and Franklin}{Urhan and Franklin}{2000}]%
        {journals/debu/UrhanF00}
\bibfield{author}{\bibinfo{person}{Tolga Urhan} {and} \bibinfo{person}{Michael~J. Franklin}.} \bibinfo{year}{2000}\natexlab{}.
\newblock \showarticletitle{XJoin: A Reactively-Scheduled Pipelined Join Operator.}
\newblock \bibinfo{journal}{\emph{IEEE Data Eng. Bull.}} \bibinfo{volume}{23}, \bibinfo{number}{2} (\bibinfo{year}{2000}), \bibinfo{pages}{27--33}.
\newblock
\urldef\tempurl%
\url{http://dblp.uni-trier.de/db/journals/debu/debu23.html#UrhanF00}
\showURL{%
\tempurl}


\bibitem[\protect\citeauthoryear{Walenz, Sintos, Roy, and Yang}{Walenz et~al\mbox{.}}{2019}]%
        {10.14778/3368289.336830}
\bibfield{author}{\bibinfo{person}{Brett Walenz}, \bibinfo{person}{Stavros Sintos}, \bibinfo{person}{Sudeepa Roy}, {and} \bibinfo{person}{Jun Yang}.} \bibinfo{year}{2019}\natexlab{}.
\newblock \showarticletitle{Learning to Sample: Counting with Complex Queries}.
\newblock \bibinfo{journal}{\emph{Proc. {VLDB} Endow.}} \bibinfo{volume}{13}, \bibinfo{number}{3} (\bibinfo{year}{2019}), \bibinfo{pages}{390--402}.
\newblock
\urldef\tempurl%
\url{https://doi.org/10.14778/3368289.3368302}
\showDOI{\tempurl}


\end{thebibliography}
\end{document}